\documentclass[12pt]{article}
\usepackage{mathtools}
\usepackage{amssymb}
\usepackage{geometry}
\geometry{
 a4paper,
 total={165mm,245mm},
 left=22mm,
 right=22mm,
 top=25mm,
 bottom=25mm
 }
\usepackage[utf8]{inputenc}
\usepackage[english]{babel}
\usepackage{pict2e}
\usepackage{mathrsfs}
\usepackage[framemethod=tikz]{mdframed}
\usepackage{mathdots}
\usepackage{tikz}
\usepackage{lipsum}

\definecolor{mycolor}{rgb}{0.9, 0.1, 0.2}

\newmdenv[innerlinewidth=1.5pt, roundcorner=4pt,linecolor=mycolor,innerleftmargin=6pt,
innerrightmargin=6pt,innertopmargin=6pt,innerbottommargin=6pt]{mybox}
\usepackage{graphicx}
\usepackage{caption}
\usepackage{tikz,tikz-cd}
\usetikzlibrary{matrix, arrows}

\graphicspath{{/Users/Charley/Pictures/LaTeX/}}
\emergencystretch=1em
\usepackage{letltxmacro}

\usepackage{xcolor}
\definecolor{green(munsell)}{rgb}{0.0, 0.66, 0.47}
\definecolor{BlueGreenn}{rgb}{0.3,0.5,0.8}
\definecolor{darkblue}{rgb}{0.1,0.12,0.24}
\definecolor{DB}{rgb}{0.3,0.3,0.3}
\definecolor{DOr}{rgb}{0.7,0.3,0.3}
\definecolor{DGr}{rgb}{0.3,0.7,0.3}
\definecolor{DBl}{rgb}{0.1,0.3,0.5}
\definecolor{arylideyellow}{rgb}{0.91, 0.84, 0.42}
\definecolor{burntorange}{rgb}{0.8, 0.33, 0.0}
\definecolor{chromeyellow}{rgb}{1.0, 0.65, 0.0}
\usepackage[hypertexnames=true, citecolor = burntorange, colorlinks = true, linkcolor = darkblue, pdffitwindow = false, urlbordercolor = white,pdfborder={0 0 0}]{hyperref}

\usepackage{csquotes}
\usepackage{amsthm}
\usepackage{amsmath}
\newcommand{\email}[1]{\href{mailto:#1}{#1}}
\usepackage{color,soul}
\usepackage{relsize}
\theoremstyle{plain}
\newtheorem{theorem}{Theorem}
\theoremstyle{definition}
\newtheorem{definition}[theorem]{Definition}
\theoremstyle{plain}
\newtheorem{corollary}[theorem]{Corollary}
\theoremstyle{plain}
\newtheorem{proposition}[theorem]{Proposition}
\theoremstyle{remark}

\theoremstyle{remark}
\newtheorem{remark}[theorem]{Remark}
\usepackage[rightcaption]{sidecap}
\newtheorem{example}[theorem]{Example}
\newtheorem{examples}[theorem]{Examples}
\theoremstyle{plain}
\newtheorem{lemma}[theorem]{Lemma}
\usepackage{enumerate}

\usepackage {tikz}
\usetikzlibrary {positioning}
\usepackage {xcolor}
\definecolor {processblue}{cmyk}{0.96,0,0,0}
\usetikzlibrary{positioning}

\newcommand{\bigbplus}{
  \mathop{
    \vphantom{\bigoplus} 
    \mathchoice
      {\vcenter{\hbox{\resizebox{\widthof{$\displaystyle\bigoplus$}}{!}{$\boxplus$}}}}
      {\vcenter{\hbox{\resizebox{\widthof{$\bigoplus$}}{!}{$\boxplus$}}}}
      {\vcenter{\hbox{\resizebox{\widthof{$\scriptstyle\oplus$}}{!}{$\boxplus$}}}}
      {\vcenter{\hbox{\resizebox{\widthof{$\scriptscriptstyle\oplus$}}{!}{$\boxplus$}}}}
  }\displaylimits 
}

\newcommand{\Hil}{{\mathcal H}}

\newcommand{\F}{{\mathcal F}}
\newcommand{\R}{{\mathcal R}}

\title{Fock representations of Zamolodchikov algebras\newline and R-matrices}
\author{Gandalf Lechner\thanks{Cardiff University, School of Mathematics,
Cardiff, CF24 4AG, UK.
E-mail: \email{LechnerG@Cardiff.ac.uk}}
\and
Charley Scotford\thanks{Cardiff University, School of Mathematics,
Cardiff, CF24 4AG, UK.
E-mail: \email{ScotfordC@Cardiff.ac.uk}}}
\date{September 28, 2019}
\begin{document}
\maketitle
\begin{abstract}
A variation of the Zamolodchikov-Faddeev algebra over a finite dimensional Hilbert space $\mathcal{H}$ and an involutive unitary $R$-Matrix $S$ is studied. This algebra carries a natural vacuum state, and the corresponding Fock representation spaces $\mathcal{F}_S(\mathcal{H})$ are shown to satisfy $\mathcal{F}_{S\boxplus R}(\mathcal{H}\oplus\mathcal{K}) \cong \mathcal{F}_S(\mathcal{H})\otimes \mathcal{F}_R(\mathcal{K})$, where $S\boxplus R$ is the box-sum of $S$ (on $\mathcal{H}\otimes\mathcal{H}$) and $R$ (on $\mathcal{K}\otimes\mathcal{K}$). This analysis generalises the well-known structure of Bose/Fermi Fock spaces and a recent result of Pennig.\par
It is also discussed to which extent the Fock representation depends on the underlying $R$-matrix, and applications to quantum field theory (scaling limits of integrable models) are sketched.
\end{abstract}

\section{Introduction}

The Zamolodchikov-Faddeev algebras \cite{ZamFadd,ZamFadd2} are a class of quadratic exchange algebras of ``creation'' and ``annihilation'' operators which generalize the familiar CCR and CAR algebras \cite{BratRob2} and are closely related to Wick algebras that allow normal ordering \cite{ZamAlg2}. These algebras and their Hilbert space representations are of central importance in integrable quantum field theory (see, for example, \cite{Smirnov:1992,AbdallaAbdallaRothe:2001,Lechner}) as well as in other fields such as $q$-deformations \cite{QDef,QDef5} or anyonic statistics.

The central datum defining the relations in a Zamolodchikov algebra is a solution $S$ of the Yang-Baxter equation. In the QFT context, $S$ plays the role of an elastic two-particle scattering matrix and depends on a spectral parameter (rapidity), i.e. it is a matrix-valued function ${\mathbb R}\ni\theta\mapsto S(\theta)\in\text{End}({\mathcal H}\otimes\mathcal H)$, where $\mathcal H$ is a finite-dimensional Hilbert space.

The present work has its background in an ongoing research project about short-distance scaling limits of integrable quantum field theories \cite{Me}, generalizing previous work on scalar models \cite{Scaling}. At finite scale, such a QFT is defined in terms of a mass parameter and a Fock (vacuum) representation of the Zamolodchikov algebra of its two-particle S-matrix~$S$. Essential properties of the scaling limit theory are governed by $\theta$-independent solutions of the Yang-Baxter equation derived from $S$, namely the value $S_0:=S(0)$ at zero rapidity transfer, and the two limits\footnote{The significance of these objects is explained in Section~\ref{section:outlook}.} $S_\pm:=\lim_{\theta\to\pm\infty}S(\theta)$.

This motivates a closer study of the Zamolodchikov algebras build from constant (parameter-independent) unitary solutions to the Yang-Baxter equation (``R-matrices''). As $R:=S_0,S_\pm$ are often involutive (that is, $R^2=1$), we can use and apply recently established tools and results about the structure of the space of all unitary involutive R-matrices \cite{Y-B}, denoted $\R_0(\Hil)$ for base space $\Hil$.

In Section 2, we define Zamolodchikov-Faddeev algebras as abstract unital ${}^*$-algebras and their natural vacuum functional $\omega$. Section 3 is then devoted to an analysis of Fock representations. These amount to carrying out the GNS construction w.r.t. $\omega$, but since it is not initially clear if $\omega$ is positive, an independent construction has to be given. We do this by adapting a concrete representation known from quantum field theory to our setting and verify the GNS property afterwards (Theorem~\ref{theorem:GNS}). Given the Zamolodchikov algebra based on an involutive R-matrix $S\in\R_0(\Hil)$, this provides us with an $S$-symmetric Fock representation space $\F_S(\Hil)$. For special choices of $S$, this coincides with the Bose or Fermi Fock space over~$\Hil$.

In Sections 4 and 5, we study the dependence of $\F_S(\Hil)$ on $S$. Adopting the box-sum operation $\boxplus$ on $\bigcup_\Hil\R_0(\Hil)$ from \cite{Y-B} (which yields $R\boxplus S\in\R_0(\Hil\oplus{\mathcal K})$ for $R\in\R_0(\Hil)$, $S\in\R_0({\mathcal K})$), we prove 
\begin{align}\label{1}
 \F_{R\boxplus S}(\Hil\oplus{\mathcal K})
 \cong
 \F_R(\Hil) \otimes \F_S({\mathcal K})
\end{align}
in Theorem~\ref{MainThe} as our main result. It is interesting to note that for the case that all resulting Fock spaces are finite-dimensional (which requires $R,S$ to have ``Fermionic'' behaviour), such an isomorphism was recently proven by Pennig \cite{Ulrich1} with quite different methods as part of his classification of polynomial exponential functors on the category of finite-dimensional Hilbert spaces. As required for applications in quantum field theory, our result holds for infinite-dimensional Fock spaces and does not only give an isomorphism of Hilbert spaces, but also an isomorphism of representations of Zamolodchikov algebras.

The isomorphism \eqref{1} is of particular interest because up to a natural notion of equivalence, every involutive R-matrix can be written as an iterated box sum of finitely many very simple $(\pm1)$ R-matrices \cite{Y-B}. These correspond to free field theories, and the resulting decomposition of the Fock space and Zamolodchikov representations could turn out to be a tool to investigate asymptotic freedom of scaling limits. As a first step in this direction, we investigate in Section~5 under which conditions we may identify our Fock representation with tensor products of simpler ones, providing examples and counterexamples.

The article ends with an outlook to applications in quantum field theory in Section 6. We recall the definition of a regular (unitary, Hermitian analytic, crossing-symmetric, regular) two-body S-matrix and investigate in several examples when its limits as $\theta\to\pm\infty$ are involutive. 

A more detailed analysis of the short distance scaling limits will appear in a future work.

\section{An Abstract Zamolodchikov-Faddeev Algebra}
We begin by abstractly defining a variation of the well-known Zamolodchikov-Faddeev algebra \cite{ZamFadd, Faddeev:1980zy}. Let $\mathcal{L}$ be a (separable) Hilbert space and $S$ be a set of $d^4$ ($d\in \mathbb{N}$) complex numbers whose elements are labelled by symbols $S^{\alpha \beta}_{\delta \gamma}$ where $\alpha, \beta,\delta,\gamma \in \{1, \ldots, d\}$. We then define the unital $*$-algebra $\mathcal{Z}(S, \mathcal{L})$ as the algebra generated by the symbols $1_{\mathcal{Z}(S, \mathcal{L})}, Z_{1}(f), Z_2(f), \ldots, Z_d(f)$ for all $f\in \mathcal{L}$ which obey the following exchange relations: 
\begin{equation}\label{InfRels1}
Z_{\alpha}(f)Z_{\beta}(g) = S^{\beta \alpha}_{\delta \gamma} Z_{\gamma}(g)Z_{\delta}(f), 
\end{equation}
\begin{equation}\label{InfRels2}
Z_{\alpha}(f)Z^*_{\beta}(g) = S^{\alpha \gamma}_{\beta \delta}Z^*_{\gamma}(g)Z_{\delta}(f) + \delta^{\alpha}_{\beta}\cdot \langle f,g\rangle_{\mathcal{L}}1_{\mathcal{Z}(S,\mathcal{L})},
\end{equation}
where we understand that the repeated indices in an expression imply the sum over all possible values (Einstein summation convention).
\begin{remark}
We may view $S$ as a linear map over the tensor square of a Hilbert space $\mathcal{H}$ with dimension $d$. Then the numbers $S^{\alpha \beta}_{\delta \gamma}$ can be viewed as the matrix elements $\langle e_{\alpha} \otimes e_{\beta}, S(e_{\delta} \otimes e_{\gamma})\rangle $ if $(e_{\alpha})_{\alpha=1}^{d_{\mathcal{H}}}$ is an orthonormal basis of $\mathcal{H}$. Zamolodchikov algebras play a prominent role in integrable models of quantum field theory, see for example \cite{5}. A related class of algebras, the so-called \textit{Wick algebras}, are defined by omitting \eqref{InfRels2} from the definition. Their representations have been studied in \cite{ZamAlg1, ZamAlg2, ZamAlg3}.
\end{remark}
At this stage it is not at all clear whether or not there exist Hilbert space representations of $\mathcal{Z}(S, \mathcal{L})$ - for example, in \cite[p. 18]{ZamAlg2} it is shown that for certain choices of $S$ the corresponding Wick algebra admits no Hilbert space representations. As we will see, under certain assumptions on $S$ a GNS representation of $\mathcal{Z}(S,\mathcal{L})$ can be constructed.
\par
The notion of Wick ordering (sometimes also known as ``normal" ordering) is a prolific and useful concept in the analysis of these algebras. To write an element $X\in \mathcal{Z}(S, \mathcal{L})$ in Wick ordered form means to apply the governing relation \eqref{InfRels2} such that $X$ becomes of the form
\begin{equation}
\label{WickForm}
\sum_{\boldsymbol{\eta}, \boldsymbol{\xi}} \zeta_{\boldsymbol{\eta}, \boldsymbol{\xi}} Z^*_{\boldsymbol{\eta}}(f_{\boldsymbol{\eta}})Z_{\boldsymbol{\xi}}(g_{\boldsymbol{\xi}})
\end{equation}
where $\zeta_{\boldsymbol{\eta}, \boldsymbol{\xi}} \in \mathbb{C}$. The multi-index notation we have adopted here can be read as, for example,
$$Z^*_{\boldsymbol{\eta}}(f_{\boldsymbol{\eta}}) = Z^*_{\eta_{1}}(f_{\eta_{1}})Z^*_{\eta_{2}}(f_{\eta_{2}}) \cdots Z^*_{\eta_{N}}(f_{\eta_{N}}),$$
where all $f_{\eta_{n}} \in \mathcal{L}$ and $|\boldsymbol{\eta}| = N \in \mathbb{N}_0$ - in the case where $|\boldsymbol{\eta}| =0$, we take $Z^*_{\boldsymbol{\eta}}(f_{\boldsymbol{\eta}}) = 1_{\mathcal{Z}(S, \mathcal{L})}$. We also remark that for the case of $|\boldsymbol{\eta}| = |\boldsymbol{\xi}| = 0$ we have just a multiple of the identity. Every element of $\mathcal{Z}(S,\mathcal{L})$ can be written in Wick ordered form. The Wick ordered form is typically not unique as we may exchange any two $Z$ or $Z^*$ elements in the expression using \eqref{InfRels1}. However, the term with $|\boldsymbol{\eta}| = |\boldsymbol{\xi}| =0$ \textit{is} unique.

\par
This discussion facilitates the definition of a linear functional over $\mathcal{Z}(S, \mathcal{L})$ and in particular proves it to be uniquely defined by the properties we outline.
\begin{definition}\label{FinFunDef}
We define a normalised linear functional $\omega : \mathcal{Z}(S, \mathcal{L}) \to \mathbb{C}$ by the properties
\begin{enumerate}[i)]
\item \begin{equation}
\label{LinFun1}
\omega(1_{ \mathcal{Z}(S, \mathcal{L})}) = 1,
\end{equation}
\item \begin{equation}
\label{LinFun2}
\omega(Z^*_{\alpha}(f)\cdot X) =0,
\end{equation}
\item \begin{equation}
\label{LinFun3}
\omega(X\cdot Z_{\alpha}(f)) =0, 
\end{equation}
\end{enumerate}
for all $\alpha \in \{1,\ldots, d\}$, $f\in \mathcal{L}$ any $X \in  \mathcal{Z}(S, \mathcal{L})$.
\end{definition}
\par
Defining a second functional as $\lambda(X) := \overline{\omega(X^*)}$ and applying uniqueness, we see that $\omega$ is Hermitian, but it is not necessarily positive.
\begin{examples}\label{AlgExs}
We consider here some simple examples of $\mathcal{Z}(S,\mathcal{L})$.\par
If we take first $S^{ \beta \alpha}_{\delta \gamma} = \pm \delta^{\alpha}_{\delta}\delta^{\beta}_{\gamma}$, where $\delta$ is the Kronecker delta, the relations \eqref{InfRels1} and \eqref{InfRels2} now read (for $f,g \in \mathcal{L}$)
\begin{equation}\label{FinZam1}
Z_{\alpha}(f)Z_{\beta}(g) = \pm Z_{\beta}(g)Z_{\alpha}(f),
\end{equation}
\begin{equation}
\label{FinZam2}
Z_{\alpha}(f)Z^*_{\beta}(g) = \pm Z^*_{\beta}(g)Z_{\alpha}(f) + \delta^{\alpha}_{\beta}\cdot \langle f,g\rangle_{\mathcal{L}}
\end{equation}
Choosing an orthonormal basis $(e_{\alpha})_{\alpha=1}^d$ of $\mathbb{C}^d$, one realises 
$$Z_{\alpha}(f) =: a(e_{\alpha}\otimes f)$$
satisfy the governing relations of the CCR$(\mathbb{C}^d \otimes \mathcal{L})$ ($+$) and CAR$(\mathbb{C}^d \otimes \mathcal{L})$ ($-$) algebras \cite{BratRob2, EvansQA}, respectively.  We will note more on their representations in the next section.
If we instead take $S^{\alpha \beta}_{\delta \gamma} = - \delta^{\alpha}_{\delta} \delta^{\beta}_{\gamma}$, the governing relations become
\begin{equation}\label{Deg1}
Z_{\alpha}(f)Z_{\beta}(g) =  - Z_{\alpha}(g)Z_{\beta}(f)
\end{equation}
\begin{equation}\label{Deg2}
Z_{\alpha}(f)Z^*_{\beta}(g) =\delta^{\alpha}_{\beta}\left( - \sum_{\delta} Z^*_{\delta}(g)Z_{\delta}(f) +  1_{\mathcal{Z}(1, \mathcal{L})}\right).
\end{equation}
The interest in this example comes in the Fock representation (which we will discuss more in the next section), but for now we simply note that this example is also explored for the case of $\mathcal{L} = \mathbb{C}$ in \cite[p. 48]{ZamAlg2} in a Wick algebraic setting where it is known as a ``degenerate case".\par
\end{examples}
\section{Fock Representations}\par
We now consider (pre-)Hilbert space representations of $\mathcal{Z}(S,\mathcal{L})$. The motivation for considering such representations are many - for example, in a quantum field theoretic setting, they provide the framework to describe systems of particles and the observables present in a vacuum representation as (typically unbounded) operators. \par
To this end we take the tensor product $\tilde{\mathcal{H}} := \mathcal{H} \otimes \mathcal{L}$ (we reserve the notation of the tilde signifying the tensor product with $\mathcal{L}$) of a Hilbert space $\mathcal{H}$ (of finite dimension $d_{\mathcal{H}}$) and the second internal space $\mathcal{L}$ for which we do not specify (or require) finite dimensionality. The set of complex numbers $S$ we now view as an endomorphism on $\mathcal{H}^{\otimes 2}$, which we may realise as a $d_{\mathcal{H}}^2 \times d_{\mathcal{H}}^2$ matrix where the matrix elements are as mentioned previously $S^{\alpha \beta}_{\delta \gamma} = \langle e_{\alpha}\otimes e_{\beta}, S(e_{\delta}\otimes e_{\gamma})\rangle$. \par
In applications in quantum field theory, a parameter-dependent version $\mathbb{R} \ni \theta \mapsto \boldsymbol{S}(\theta)$ plays the role of a two-particle scattering operator. Such operators have a number of analytic and algebraic properties including unitarity and Hermitian analyticity. Moreover, they are solutions of the parameter-dependent Yang-Baxter equation (see \cite{AbdallaAbdallaRothe:2001} for a more detailed account of these properties, and Definition 
The parameter-independent $S$ considered here then corresponds to $S=\boldsymbol{S}(0)$ or one of the limits $S=\lim_{\theta \to \pm \infty}(\boldsymbol{S}(\theta))$. These matrices are still far from being arbitrary, so we will therefore restrict from on to a particular class of operators $S$ in the definition of $\mathcal{Z}(S, \mathcal{L})$.
\begin{definition}
Let $\mathcal{H}$ be a Hilbert space. An \em{involutive R-matrix} on $\mathcal{H}$ is a unitary, involutive map $S\in \mathcal{B}(\mathcal{H}\otimes \mathcal{H})$ that solves the Yang-Baxter equation. That is
$$S=S^*=S^{-1},$$
\begin{equation}
\label{Y-BC}
\left(S\otimes 1_{\mathcal{H}}\right)\left(1_{\mathcal{H}} \otimes S \right)\left(S\otimes 1_{\mathcal{H}}\right) = \left(1_{\mathcal{H}} \otimes S\right)\left(S\otimes 1_{\mathcal{H}}\right)\left(1_{\mathcal{H}} \otimes S\right).
\end{equation}
We also denote by $\mathcal{R}_0(\mathcal{H})$ the set of all involutive $R$-matrices on $\mathcal{H}$.
\end{definition}
We will mostly be interested in the case where $\mathcal{H}$ is finite dimensional (which we will always explicitly state), otherwise we always take $\mathcal{H}$ to be separable.
\par
Before beginning the discussion of representations of $\mathcal{Z}(S,\mathcal{L})$ it is necessary to extend the definition of $S$ to involve the space $\mathcal{L}$ and to do so we  introduce the unitary operator $U_n : \left(\mathcal{H} \otimes \mathcal{L}\right)^{\otimes n} \to \mathcal{H}^{\otimes n} \otimes \mathcal{L}^{\otimes n}$ defined by
\begin{equation}\label{DisUni}
U_n\left( \bigotimes_{i=1}^n (h_i \otimes f_i)\right) = \left(\bigotimes_{i=1}^n h_i\right) \otimes \left(\bigotimes_{i=1}^n f_i \right), \quad h_i \in \mathcal{H}, f_i \in \mathcal{L}.
\end{equation}
\par
We can easily see that $U_n$ is unitary from the above expression and it can be thought of as ``disentangling" contributions from both Hilbert spaces, which induces an isomorphism between the domain and codomain of $U_n$, hence we will explicitly describe data acting only on $\mathcal{H}^{\otimes n} \otimes \mathcal{L}^{\otimes n}$ in this section. Employing the bounded linear operator $\mathcal{B}(\mathcal{L}\otimes \mathcal{L}) \ni F:= F_{\mathcal{L}}$ (the tensor flip) we write $$S_F := S\otimes F : \mathcal{H}^{\otimes 2} \otimes \mathcal{L}^{\otimes 2} \to  \mathcal{H}^{\otimes 2} \otimes \mathcal{L}^{\otimes 2},$$ which one readily checks is still unitary, involutive and a solution to the Yang-Baxter equation. Though this is the explicit operator used in the construction, the interest is mostly in the contributions from $S$, and so we will avoid using $S_F$ in further notation where possible.
\par
We would like to consider the GNS representation of $\mathcal{Z}(S, \mathcal{L})$ with respect to the functional $\omega$, but at this stage it is unclear if $\omega$ is positive. Instead, we will independently construct a representation of $\mathcal{Z}(S, \mathcal{L})$ and show it has the GNS property.\par
We choose an orthonormal basis $(e_{\alpha})_{\alpha =1}^{d_{\mathcal{H}}}$ of $\mathcal{H}$ and also make use of multi-index notation where we take $e_{\boldsymbol{\alpha}}\in \mathcal{H}^{\otimes n}$ to mean $e_{\alpha_1} \otimes \cdots \otimes e_{\alpha_n}$.
\par
Let $S\in \mathcal{R}_0(\mathcal{H})$ then we recall the structure of a Hilbert space representation of $\mathcal{Z}(S, \mathcal{L})$ as laid in a field theoretic setting in \cite{5}.
Denote by $S_{k,n} := 1^{\otimes (k-1)} \otimes S\otimes 1^{\otimes (n-k-1)}$ (where $S(e_{\alpha}\otimes e_{\beta}) = S^{\gamma \delta}_{\alpha \beta} e_{\gamma}\otimes e_{\delta}$) then we construct unitary operators on $\mathcal{H}^{\otimes n}\otimes \mathcal{L}^{\otimes n}$:
\begin{equation}\label{PermOps}
D_n^{S}(\tau_i) = S_{i, n}\otimes F_{i}
\end{equation}
where $\tau_i \in \mathfrak{S}_n$ (the symmetric group of $n$ letters) is a transposition, swapping nearest neighbour $i$ and $(i+1)$-th elements.
It is straightforward to see that these operators generate a unitary representation of $\mathfrak{S}_n$ on $\mathcal{H}^{\otimes n}\otimes \mathcal{L}^{\otimes n}$, then we can define an orthogonal projection \cite{5} by taking their mean:
\begin{equation}\label{InfProj}
P_n^{S} : = \frac{1}{n!} \sum_{\pi \in \mathfrak{S}_n} D_n^S(\pi) \in \mathcal{B}(\mathcal{H}^{\otimes n}\otimes \mathcal{L}^{\otimes n}).
\end{equation}
Define now the spaces
$$\tilde{\mathcal{H}}_n := U^*_nP_n^SU_n\tilde{\mathcal{H}}^{\otimes n}$$
then the \textit{$S$-symmetrised Fock space} is given by
$$\mathcal{F}_S(\tilde{\mathcal{H}}) := \bigoplus_{n\geq 0} \tilde{\mathcal{H}}_n.$$
\par
For \eqref{PermOps} to be a unitary representation of $\mathfrak{S}_n$ involutivity of $S$ is a crucial property. Dropping involutivity, unitary $R$-matrices only give representations of the Braid groups. The concept of an $S$-symmetric Fock space can be generalised to non-involutive $S$ \cite{Braided}, but it won't play a role in the current work.\par
On this space we have a vacuum vector $\Omega_S = 1\oplus 0 \oplus ...$, and a dense subspace $\mathcal{F}^0_{S}(\tilde{\mathcal{H}})$ (consisting of vectors of ``finite particle" number, meaning they are terminating direct sums of elements in increasing tensor powers). There is a natural unitary $U$ from $\mathcal{F}_S(\tilde{\mathcal{H}})$ to the ``disentangled Fock space" $\bigoplus_n P_n^S(\mathcal{H}^{\otimes n}\otimes \mathcal{L}^{\otimes n})=: \bigoplus_n \mathcal{D}_n(\tilde{\mathcal{H}})$, namely $U=\bigoplus_n U_n$ is the second quantisation of the unitaries $U_n$ \eqref{DisUni}. We may therefore treat operators on $\mathcal{F}_S(\tilde{\mathcal{H}})$ and $U\mathcal{F}_S(\tilde{\mathcal{H}})$ on the same footing. To discuss the Fock representation of $\mathcal{Z}(S, \mathcal{L})$, it is more convenient to work on the latter space, and we define
\begin{subequations}\label{eq:1}
     \begin{align}
z_S^*(e_{\xi} \otimes g)v_n \otimes f_n &:= \sqrt{n+1}P_{n+1}^S\left(e_{\xi}\otimes v_n \otimes g\otimes f_n\right)\label{CreAnnInf1}, \\
z_S(e_{\xi}\otimes g) &:= \left(z_S^*(e_{\xi}\otimes g)\right)^*\label{CreAnnInf2},
\end{align}
\end{subequations}
for $v_n\in \mathcal{H}^{\otimes n}, f_n \in \mathcal{L}^{\otimes n}, g\in \mathcal{L}.$ \par
\begin{remark}
We can write the explicit action of $z_S$ in terms of the scalar product on $\tilde{\mathcal{H}}_n$ by
\begin{equation*}
\begin{split}
\langle w_{n-1} \otimes h_{n-1}, z_S(e_{\xi}\otimes g)v_{n}\otimes f_{n} \rangle  &=\sqrt{n} \langle  e_{\xi} \otimes w_{n-1} \otimes g\otimes h_{n-1}, v_n \otimes f_n \rangle, \\ z_S(e_{\xi}\otimes g)\Omega_S &= 0,
\end{split}
\end{equation*}
for  $w_{n-1}\otimes h_{n-1} \in \mathcal{H}^{\otimes (n-1)}\otimes \mathcal{L}^{\otimes (n-1)}$. These operators then restrict to the symmetrised spaces $\mathcal{D}_n(\tilde{\mathcal{H}})$.
In bra-ket notation, \eqref{CreAnnInf1} simply reads
\begin{equation}\label{Bracket}
z_S(e_{\xi}\otimes g)v_{n}\otimes f_{n} = \sqrt{n} \langle e_{\xi}\otimes g | v_n\otimes f_n.
\end{equation}
\end{remark}
\par
We have defined $z^*_S, z_S$ with basis vectors of $\mathcal{H}$ as arguments, however we can extend the definition to operators $z_S^*(\psi)$, $\psi \in \tilde{\mathcal{H}}$ by linearity in their arguments (care to be taken when doing the same to $z_S$ as it is anti-linear in its argument).\par For ease of notation we will use the shorthand $z_S(e_{\xi} \otimes g) = z_{S,\alpha}(g)$, and the polynomial algebra generated by all $z_{S, \alpha}(f), z_{S, \beta}^*(g), 1_{\tilde{\mathcal{H}}}$ we will denote by $\mathcal{P}_S$.\par
\begin{proposition}\label{Cyclicity}
The vacuum vector $\Omega_S$ is cyclic for the algebra $\mathcal{P}_S$, that is $\mathcal{P}_S\Omega_S \subset U \mathcal{F}_S(\tilde{\mathcal{H}})$ is dense.
\end{proposition}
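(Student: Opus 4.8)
The plan is to produce a dense subspace of $U\mathcal{F}_S(\tilde{\mathcal{H}})=\bigoplus_n\mathcal{D}_n(\tilde{\mathcal{H}})$ that is built entirely from creation operators acting on $\Omega_S$. Since each annihilation operator kills $\Omega_S$ and every generator maps the finite-particle subspace into itself, we automatically have $\mathcal{P}_S\Omega_S\subseteq U\mathcal{F}^0_S(\tilde{\mathcal{H}})\subseteq U\mathcal{F}_S(\tilde{\mathcal{H}})$; it therefore suffices to show that the purely creation-generated vectors
\[
\Phi(\psi_1,\dots,\psi_n):=z_S^*(\psi_1)\cdots z_S^*(\psi_n)\Omega_S,\qquad \psi_i=h_i\otimes g_i\in\tilde{\mathcal{H}},\ h_i\in\mathcal{H},\ g_i\in\mathcal{L},
\]
already span a dense subspace. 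There is no need to analyse the full mixed polynomials in $z_S,z_S^*$; mere containment of these vectors in $\mathcal{P}_S\Omega_S$ will give density.

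First I would compute $\Phi$ in closed form and claim, for all $n$,
\[
z_S^*(h_1\otimes g_1)\cdots z_S^*(h_n\otimes g_n)\Omega_S=\sqrt{n!}\;P_n^S\big((h_1\otimes\cdots\otimes h_n)\otimes(g_1\otimes\cdots\otimes g_n)\big),
\]
which I would prove by induction on $n$, the cases $n=0,1$ being immediate from $P_0^S=P_1^S=1$ and definition \eqref{CreAnnInf1}. Writing $\iota_{h,g}^{(n)}\colon \mathcal{H}^{\otimes n}\otimes\mathcal{L}^{\otimes n}\to\mathcal{H}^{\otimes(n+1)}\otimes\mathcal{L}^{\otimes(n+1)}$ for the insertion map $v_n\otimes f_n\mapsto(h\otimes v_n)\otimes(g\otimes f_n)$, definition \eqref{CreAnnInf1} reads $z_S^*(h\otimes g)=\sqrt{n+1}\,P_{n+1}^S\iota_{h,g}^{(n)}$ on $\mathcal{D}_n(\tilde{\mathcal{H}})$. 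The inductive step then reduces to the projection identity
\[
P_{n+1}^S\,\iota_{h,g}^{(n)}\,P_n^S=P_{n+1}^S\,\iota_{h,g}^{(n)},
\]
that is, $P_{n+1}^S$ absorbs the symmetriser $P_n^S$ of the last $n$ slots; substituting it into the recursion promotes the claim from $n$ to $n+1$.

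I would establish this identity by group averaging. As $D_{n+1}^S$ is a unitary representation of $\mathfrak{S}_{n+1}$ and $P_{n+1}^S$ is the projection onto its invariants, $P_{n+1}^S D_{n+1}^S(\sigma)=P_{n+1}^S$ for every $\sigma$. Restricting to the stabiliser $\mathrm{Stab}(1)\cong\mathfrak{S}_n$ of the first slot, the generators $D_{n+1}^S(\tau_i)=S_{i,n+1}\otimes F_i$ with $i\geq 2$ factor as $1\otimes D_n^S(\tau_{i-1})$ under the disentangled identification (here the flip $F$ on $\mathcal{L}^{\otimes n}$ must be tracked alongside the $U_n$, but it transforms in the same way). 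Averaging over $\mathrm{Stab}(1)$ therefore gives exactly $Q:=1\otimes P_n^S$, whence $P_{n+1}^S Q=P_{n+1}^S$; since $Q$ ignores the inserted first slot one has $Q\,\iota_{h,g}^{(n)}=\iota_{h,g}^{(n)}P_n^S$, and the displayed identity follows from $P_{n+1}^S\iota^{(n)}=P_{n+1}^SQ\iota^{(n)}=P_{n+1}^S\iota^{(n)}P_n^S$.

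Finally, for density I would argue level by level. For fixed $n$ the vectors $(h_1\otimes\cdots\otimes h_n)\otimes(g_1\otimes\cdots\otimes g_n)$ are total in $\mathcal{H}^{\otimes n}\otimes\mathcal{L}^{\otimes n}$, and as $P_n^S$ is bounded with range $\mathcal{D}_n(\tilde{\mathcal{H}})$ their images span a dense subspace $V_n\subseteq\mathcal{D}_n(\tilde{\mathcal{H}})$. By linearity of $z_S^*$ in its argument all these lie in $\mathcal{P}_S\Omega_S$, and being pure $n$-particle vectors they give $\mathcal{P}_S\Omega_S\supseteq\bigoplus_n^{\mathrm{alg}}V_n$; approximating an arbitrary vector first by a terminating one and then each component within $V_n$ shows this algebraic direct sum is dense in $\bigoplus_n\mathcal{D}_n(\tilde{\mathcal{H}})=U\mathcal{F}_S(\tilde{\mathcal{H}})$. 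I expect the only genuine obstacle to be the projection identity $P_{n+1}^S(1\otimes P_n^S)=P_{n+1}^S$, together with the bookkeeping of $F$ and $U_n$ needed to identify $D_{n+1}^S|_{\mathrm{Stab}(1)}$ with $1\otimes D_n^S$; once this is in place the induction and the density assembly are routine.
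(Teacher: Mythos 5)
Your proof is correct and follows essentially the same route as the paper: both rest on the identity $z_{S}^*(\psi_1)\cdots z_{S}^*(\psi_n)\Omega_S=\sqrt{n!}\,P_n^S\bigl((h_1\otimes\cdots\otimes h_n)\otimes(g_1\otimes\cdots\otimes g_n)\bigr)$ together with totality of such vectors in each $\mathcal{D}_n(\tilde{\mathcal{H}})$, the paper merely phrasing the density dually (any $\psi$ orthogonal to $\mathcal{P}_S\Omega_S$ has vanishing $n$-particle components). The only substantive difference is that you supply the inductive proof of that iterated-creation formula via the absorption identity $P_{n+1}^S(1\otimes P_n^S)=P_{n+1}^S$, a step the paper uses without comment.
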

\begin{proof}
Let $\psi \in \mathcal{F}_S(\tilde{\mathcal{H}})$ such that $\psi$ is orthogonal to $\mathcal{P}_S\Omega$. For any $n \in \mathbb{N}_0$, and vectors $f_1, \ldots, f_n \in \mathcal{L}$, $\alpha_1,\ldots, \alpha_n \in \{1, \ldots, d_{\mathcal{H}}\}$ we then have
\begin{equation*}
\begin{split}
0 &= \langle \psi, z_{S,\alpha_1}^*(f_1) \cdots z_{S, \alpha_n}^*(f_n) \Omega_S \rangle \\
&= \sqrt{n!}\langle \psi, P_n^S(e_{\alpha_1} \otimes \cdots \otimes e_{\alpha_n} \otimes f_1 \otimes \cdots \otimes f_n)\rangle \\ 
&= \sqrt{n!}\langle \psi, e_{\alpha_1} \otimes \cdots \otimes e_{\alpha_n} \otimes f_1\otimes \cdots \otimes f_n \rangle
\end{split}
\end{equation*}
where we have used that the projection $P_n^S$ is self-adjoint and leaves the symmetrised vector $\psi$ invariant. Since vectors of the form $e_{\alpha_1} \otimes \cdots \otimes e_{\alpha_n} \otimes f_1 \otimes \cdots \otimes f_n$ form a total set in $\mathcal{H}^{\otimes n}\otimes \mathcal{L}^{\otimes n}$, we conclude that $\psi =0$. Thus $\Omega_S$ is cyclic for $\mathcal{P}_S$.
\end{proof}
Before moving to the next result we note specific elements of $\mathfrak{S}_n$ as they will play an important role in the following proof - define $\sigma_n := \tau_{n-1} \tau_{n-2} ... \tau_1 \in \mathfrak{S}_n$ which acts by taking the first element and moving it to the $n$-th position.
\begin{theorem}\label{theorem:GNS}
Let $\mathcal{H}$ be a finite dimensional Hilbert space and $S \in \mathcal{R}_0(\mathcal{H})$. Then the map $\pi_S: \mathcal{Z}(S, \mathcal{L}) \to \mathcal{P}_S$ given by
\begin{equation}
\pi_S(1_{\mathcal{Z}(S, \mathcal{L})}) := 1_{\tilde{\mathcal{H}}}, \quad \pi_S(Z_{\alpha}(f)) := z_{S, \alpha}(f)
\end{equation}
extends to a unital $*$-representation of $\mathcal{Z}(S, \mathcal{L})$ on $\mathcal{F}^0_S(\tilde{\mathcal{H}})$ with cyclic vector $\Omega_S$ and
\begin{equation}\label{Functional}
\omega(X) = \langle \Omega_S, \pi_S(X)\Omega_S \rangle. \qquad (X\in \mathcal{Z}(S, \mathcal{L}))
\end{equation}
\end{theorem}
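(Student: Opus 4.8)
The plan is to reduce everything to checking that the concrete operators $z_{S,\alpha}(f)$ and $z_{S,\alpha}^*(f)$ — all of which map the finite-particle domain $\mathcal{F}^0_S(\tilde{\mathcal{H}})$ into itself, so that their products are well defined there — satisfy the two defining relations \eqref{InfRels1} and \eqref{InfRels2} as identities of (in general unbounded) operators on $\mathcal{F}^0_S(\tilde{\mathcal{H}})$. Since $\mathcal{Z}(S,\mathcal{L})$ is the unital $*$-algebra generated by the $Z_\alpha(f)$ subject to exactly these relations, and since $z_{S,\alpha}(f)^*=z_{S,\alpha}^*(f)$ holds by the very definition \eqref{CreAnnInf2}, verifying the two relations immediately produces a well-defined unital $*$-representation $\pi_S$ on the common invariant domain $\mathcal{F}^0_S(\tilde{\mathcal{H}})$. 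Cyclicity of $\Omega_S$ is then precisely Proposition~\ref{Cyclicity}, and the functional identity \eqref{Functional} will drop out of the uniqueness of $\omega$.

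For the exchange relation \eqref{InfRels1} I would work on the creation operators. Composing two of them on a symmetrised vector $v_n\otimes f_n\in\mathcal{D}_n(\tilde{\mathcal{H}})$, and using that $P_{n+1}^S$ absorbs the lower symmetriser in the appropriate slots, gives
\begin{equation*}
z_{S,\alpha}^*(f)\,z_{S,\beta}^*(g)\,v_n\otimes f_n = \sqrt{(n+1)(n+2)}\;P_{n+2}^S\bigl(e_\alpha\otimes e_\beta\otimes v_n\otimes f\otimes g\otimes f_n\bigr).
\end{equation*}
The key point is that every vector in the range of $P_{n+2}^S$ is fixed by $D_{n+2}^S(\tau_1)=S_{1,n+2}\otimes F_1$; letting this operator act by $S$ on the first two $\mathcal{H}$-slots and by the flip on the first two $\mathcal{L}$-slots reorganises the right-hand side into $S^{\gamma\delta}_{\alpha\beta}\,z_{S,\gamma}^*(g)\,z_{S,\delta}^*(f)$. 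Taking adjoints and using $\overline{S^{\gamma\delta}_{\alpha\beta}}=S^{\alpha\beta}_{\gamma\delta}$ (valid since $S=S^*$) turns this, after relabelling the summation indices, into relation \eqref{InfRels1} for the annihilation operators.

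The substantial step, and the one I expect to be the main obstacle, is the mixed relation \eqref{InfRels2}, which I would establish by computing $z_{S,\alpha}(f)\,z_{S,\beta}^*(g)$ on a general symmetrised vector $v_n\otimes f_n$. The essential tool is a Pieri-type recursion peeling one factor off the symmetriser: decomposing $\mathfrak{S}_{n+1}$ into the $n+1$ left cosets of the subgroup fixing the first position, with representatives the cycles $\sigma_k$ introduced before the theorem, yields schematically $(n+1)P_{n+1}^S=\bigl(\sum_{k}D^S(\sigma_k)\bigr)(1\otimes P_n^S)$. The identity-coset term ($\sigma_1=\mathrm{id}$) is the one in which $z_{S,\alpha}(f)$ contracts the freshly created factor $e_\beta\otimes g$ against $e_\alpha\otimes f$, producing the scalar $\delta^\alpha_\beta\langle f,g\rangle$ times the original vector; the remaining $n$ terms reassemble, after commuting $S$ past the symmetriser, into $S^{\alpha\gamma}_{\beta\delta}\,z_{S,\gamma}^*(g)\,z_{S,\delta}(f)$. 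That this rearrangement closes up correctly — in particular that the braiding coefficients combine consistently across the slots — rests on $D_n^S$ being a genuine representation of $\mathfrak{S}_n$, which is exactly where involutivity and the Yang-Baxter equation \eqref{Y-BC} enter. The bulk of the effort will be the careful index bookkeeping on the $\mathcal{H}$- and $\mathcal{L}$-factors under $U_n$ and the tracking of the $\sqrt{n}$-normalisations.

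Finally, once $\pi_S$ is known to be a $*$-representation with $z_{S,\alpha}(f)\Omega_S=0$, I would set $\tilde{\omega}(X):=\langle\Omega_S,\pi_S(X)\Omega_S\rangle$ and verify the three characterising properties of Definition~\ref{FinFunDef}: normalisation is $\langle\Omega_S,\Omega_S\rangle=1$; property \eqref{LinFun2} follows from $\langle\Omega_S,\pi_S(Z_\alpha^*(f)X)\Omega_S\rangle=\langle z_{S,\alpha}(f)\Omega_S,\pi_S(X)\Omega_S\rangle=0$; and property \eqref{LinFun3} follows symmetrically since $\pi_S(X)\,z_{S,\alpha}(f)\Omega_S=0$. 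By the uniqueness of the functional obeying these three properties, $\tilde{\omega}=\omega$, which is precisely \eqref{Functional}.
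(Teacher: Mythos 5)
Your proposal follows essentially the same route as the paper: the creation-operator relation via absorption of $D^S(\tau_1)$ into $P_{n+2}^S$ followed by taking adjoints, the mixed relation via the coset decomposition $(n+1)P_{n+1}^S=\bigl(\sum_k D^S(\sigma_k)\bigr)(1\otimes P_n^S)$ with the identity coset producing the $\delta^\alpha_\beta\langle f,g\rangle$ contraction, cyclicity from Proposition~\ref{Cyclicity}, and the functional identity from the uniqueness of $\omega$. The only difference is that you leave the index bookkeeping in the mixed relation schematic where the paper carries it out explicitly via scalar products with test vectors, but the argument is the same.
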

\begin{proof}
We show first that the operators $z_{S, \alpha}(f), z^*_{S,\alpha}(f)$ satisfy \eqref{InfRels1} and \eqref{InfRels2} for all $f \in \mathcal{L}, \alpha \in \{1,\ldots, d_{\mathcal{H}}\}$. Let $v_n \otimes h_n \in \mathcal{D}_n(\tilde{\mathcal{H}})$, $f, g\in \mathcal{L}$ then taking into account $P_{n+2}^S = P_{n+2}^S S_{1, n+1} \otimes F_1$
\begin{equation*}
\begin{split}
z^*_{S,\alpha}(f)z^*_{S,\beta}(g) v_n \otimes h_n &= \sqrt{n+1}\sqrt{n+2} P_{n+2}^S \left(e_{\alpha} \otimes e_{\beta} \otimes v_n \otimes f\otimes g\otimes h_n\right) \\
&=  \sqrt{n+1}\sqrt{n+2} P_{n+2}^S \left( S^{\gamma \delta}_{\alpha \beta} e_{\gamma}\otimes e_{\delta} \otimes v_n \otimes g\otimes f\otimes h_n\right) \\
&= S^{\gamma \delta}_{\alpha \beta} z^*_{S,\gamma}(g)z^*_{S,\delta}(f) v_n \otimes h_n.
\end{split}
\end{equation*}
Given that $n$ and $v_n\otimes h_n$ were arbitrary, we read off
$$z^*_{S, \alpha}(f)z^*_{S, \beta}(g) = S^{\gamma \delta}_{\alpha \beta} z^*_{S,\gamma}(g) z^*_{S,\delta}(f)$$
as operators on $\mathcal{F}^0_S(\tilde{\mathcal{H}})$. Taking adjoints of both sides and applying both the unitarity and involutivity of $S$, we arrive at equation \eqref{InfRels1}. \par
For showing \eqref{InfRels2}, we first compute the action of the first term right hand side on some $v_n\otimes h_n \in \mathcal{D}_n(\tilde{\mathcal{H}})$. It is enough to do so on  vectors of the form $v_n = e_i\otimes v_{n-1}$, $h_n = a \otimes h_{n-1}$ $(a\in \mathcal{L}, v_{n-1}\otimes h_{n-1} \in \mathcal{D}_{n-1}(\tilde{\mathcal{H}}))$ as they form a total set in $\mathcal{H}^{\otimes n}$ and $\mathcal{L}^{\otimes n}$, respectively. We have:
\begin{equation*}
\begin{split}
S^{\alpha \gamma}_{\beta \delta} z^*_{S, \gamma}(g)z_{S, \delta}(f) v_n\otimes h_n &= 
\sqrt{n} S^{\alpha \gamma}_{\beta \delta} z^*_{S, \gamma}(g) \langle e_{\delta} \otimes f| (e_i \otimes v_{n-1} \otimes a\otimes h_{n-1}) \\
&=n S^{\alpha \gamma}_{\beta \delta} \delta^{\delta}_i \langle f, a\rangle P^S_{n}  \left(e_{\gamma} \otimes v_{n-1} \otimes g \otimes h_{n-1}\right)\\
&= n S^{\alpha \gamma}_{\beta i} \langle f, a\rangle  P^S_{n} \left(e_{\gamma} \otimes v_{n-1} \otimes g \otimes h_{n-1}\right).
\end{split}
\end{equation*}
Since $v_{n-1}$ and $h_{n-1}$ are correctly symmetrised, the action of the projection $P_n^S$ in the final line simplifies. Namely, we need only sum over the permutations that shift the $e_{\gamma}$ and $g$ terms such that they appear in each tensor slot \cite{Gand1}; $P_n^S = \frac{1}{n} \sum_{k=1}^n D_n^S(\sigma_k)\cdot 1\otimes P_{n-1}^S$. The above now reads
\begin{equation}\label{Rightside}
S^{\alpha \gamma}_{\beta \delta} z^*_{S, \gamma}(g)z_{S, \delta}(f) v_n\otimes h_n = S^{\alpha \gamma}_{\beta i} \langle f,a\rangle \sum_{k=1}^n D_n^S(\sigma_k) \left(e_{\gamma} \otimes v_{n-1} \otimes g\otimes h_{n-1}\right).
\end{equation}
To compute the left hand side of \eqref{InfRels2}, we now consider its scalar product with a $w_n\otimes b_n := e_j \otimes w_{n-1} \otimes c\otimes b_{n-1} \in \mathcal{D}_n(\tilde{\mathcal{H}})$ in the scalar product:
\begin{equation*}
\begin{split}
&\langle w_n \otimes b_n, z_{S,\alpha}(f) z^*_{S,\beta}(g) v_n\otimes h_n \rangle = \langle z^*_{S,\alpha}(f)w_n \otimes b_n, z^*_{S,\beta}(g) v_n\otimes h_n \rangle  \\
&= (n+1) \langle e_{\alpha}\otimes w_n \otimes f\otimes b_n, P_{n+1}^S \left(e_{\beta}\otimes v_n \otimes g\otimes h_n\right) \rangle.
\end{split}
\end{equation*}
Since $v_n$ and $h_n$ are correctly symmetrised, the projection $P_{n+1}^S$ again simplifies as before. Noting further that $\sigma_1 = 1,$ this gives:
\begin{equation*}
\begin{split}
&\langle w_n \otimes b_n, z_{S,\alpha}(f) z^*_{S,\beta}(g) v_n\otimes h_n \rangle\\
& = \left\langle e_{\alpha}\otimes w_n \otimes f\otimes b_n, \sum_{k=1}^{n+1} D_{n+1}^S(\sigma_k) \left(e_{\beta}\otimes v_n \otimes g\otimes h_n\right) \right\rangle\\
&= \left\langle e_{\alpha}\otimes w_n \otimes f\otimes b_n, \left[1 +  \sum_{k=2}^{n+1} D_{n+1}^S(\sigma_k)\right] \left(e_{\beta}\otimes v_n \otimes g\otimes h_n\right) \right\rangle \\
&=\delta^{\alpha}_{\beta}\langle f,g\rangle \langle w_n\otimes b_n, v_n\otimes h_n\rangle \\
&\qquad \qquad + \left\langle e_{\alpha}\otimes w_n \otimes f\otimes b_n, \sum_{k=2}^{n+1} D_{n+1}^S(\sigma_k) \left(e_{\beta}\otimes e_i\otimes  v_{n-1} \otimes g\otimes a \otimes  h_{n-1}\right) \right\rangle.
\end{split}
\end{equation*}
To shift the index of the sum $\sum_{k=2}^{n+1} D_{n+1}^S(\sigma_k)$ we note that it sums over the permutations shifting the $e_{\beta}$ term through each tensor slot with the first term being the permutation given by just $S_{1,n+1} \otimes F_{1} = D_{n+1}^S(\tau_1)$. If we extract this term from the sum, we can read the remaining terms as taking the second tensor slot and permuting through the remaining $n$ slots with the first slot being untouched \cite{Gand1}. Concretely, this means that we can write this as 
$$\sum_{k=2}^{n+1} D_{n+1}^S(\sigma_k) = \left(\sum_{k=1}^n 1\otimes D_n^S(\sigma_k)\right)D_{n+1}^S(\tau_1),$$
which we use and continue in the calculation:
\begin{equation*}
\begin{split}
&\langle w_n \otimes b_n, z_{S,\alpha}(f) z^*_{S,\beta}(g) v_n\otimes h_n \rangle =\delta^{\alpha}_{\beta}\langle f,g\rangle \langle w_n\otimes b_n, v_n\otimes h_n\rangle \\
& + S^{\xi \gamma}_{\beta i} \left\langle e_{\alpha}\otimes e_j \otimes  w_{n-1} \otimes f\otimes b_n, \sum_{k=1}^{n} 1\otimes D_n^S(\sigma_k) \left(e_{\xi}\otimes e_{\gamma} \otimes v_{n-1} \otimes a\otimes g\otimes h_{n-1}\right) \right\rangle \\
&= \delta^{\alpha}_{\beta}\langle f,g\rangle \langle w_n\otimes b_n, v_n\otimes h_n\rangle \\
& \qquad \qquad + S^{\alpha \gamma}_{\beta i} \langle f,a\rangle \left\langle w_{n} \otimes  b_n, \sum_{k=1}^{n}D_n^S(\sigma_k) \left( e_{\gamma} \otimes v_{n-1} \otimes g\otimes h_{n-1}\right) \right\rangle.
\end{split}
\end{equation*}
Since all elements involved were arbitrary, we now read off what we have computed by comparing the right hand slot of the scalar product:
\begin{equation}\label{Lefthand}
z_{S,\alpha}(f) z^*_{S,\beta}(g) v_n\otimes h_n =\delta^{\alpha}_{\beta}\langle f,g\rangle v_n\otimes h_n +  S^{\alpha \gamma}_{\beta i} \langle f,a\rangle \sum_{k=1}^{n}D_n^S(\sigma_k) \left( e_{\gamma} \otimes v_{n-1} \otimes g\otimes h_{n-1}\right).
\end{equation}
We can now read that (up to the contraction term) we have equality between \eqref{Rightside} and \eqref{Lefthand}, and thus \eqref{InfRels2} is satisfied so $\pi_S$ is indeed a representation of $\mathcal{Z}(S,\mathcal{L})$ on $U\mathcal{F}_S^0(\tilde{\mathcal{H}})$. \par
Moreover, $\Omega_S$ is cyclic for this representation by Proposition \ref{Cyclicity} and then the GNS property follows once we realise that the functional defined by the right hand side of \eqref{Functional} satisfies the properties of $\omega$ as outlined in Definition \ref{FinFunDef} since $\Omega_S$ is a normalised vector and $z_S$ annihilates it.
\end{proof}
\begin{remark}
It is now apparent that $\omega$ is positive (for $S \in \mathcal{R}_0(\mathcal{H})$). For any $X\in \mathcal{Z}(S, \mathcal{L})$:
\begin{equation*}
\begin{split}
\omega(X^*X) &= \langle \Omega_S, \pi_S(X^*X) \Omega_S\rangle\\
& = \langle \Omega_S, \pi_S(X)^*\pi_S(X)\Omega_S\rangle \\
&= \langle \pi_S(X)\Omega_S, \pi_S(X)\Omega_S \rangle \\
&= \| \pi_S(X)\Omega_S \|^2 \geq 0.
\end{split}
\end{equation*}
\end{remark}
Revisiting the examples outlined in the previous section, we see that for $S^{\alpha \beta}_{\delta \gamma} = \pm \delta^{\alpha}_{\gamma}\delta^{\beta}_{\delta}$ we arrive at the totally symmetric ($+$) or totally antisymmetric ($-$) Fock space over $\tilde{\mathcal{H}}$, usually known as the Bosonic and Fermionic Fock spaces, respectively.\par
The case of $S^{\alpha \beta}_{\delta \gamma} = -\delta^{\alpha}_{\delta}\delta^{\beta}_{\gamma}$ ($S=-1$) results in a very small space for $\mathcal{L} = \mathbb{C}$: For this particular choice of $S$, the projection simplifies greatly to 
$$P_n^S = \sum_{\pi \in \mathfrak{S}_n} \text{sgn}(\pi) = \begin{cases} 1,&\ n=1 \\ 0,& \ n>1 \end{cases}$$
where sgn$(\pi)$ is the sign of the permutation $\pi$. There then only exists a zero-particle space (multiples of the vacuum) and the single particle space $\tilde{\mathcal{H}}$:
$$\mathcal{F}_S(\tilde{\mathcal{H}}) = \Omega_S \oplus \tilde{\mathcal{H}}.$$
The other extreme is given by $S = 1$ and $\mathcal{L} = \mathbb{C}$, which generates the full unsymmetrised (or ``Boltzmann") Fock space
$$\mathcal{F}_S(\mathcal{H}) = \bigoplus_{n\geq 0} \mathcal{H}^{\otimes n}.$$
In general, $\mathcal{F}_S(\mathcal{H})$ ``interpolates" between these two extreme cases.
\section{Operations on R-Matrices and Equivalences}
In this work we aim to generalise the following notion (sometimes referred to as an ``exponential relation", see for example \cite{BSZ} and references therein): Let $\mathcal{H}, \mathcal{K}$ be Hilbert spaces, and and denote by $\mathcal{F}_{\pm}(\mathcal{H})$ the Bosonic/Fermionic Fock space over $\mathcal{H}$. To keep touch with our earlier constructions, this corresponds to the Fock representation of the algebra $\mathcal{Z}(\pm F_{\mathcal{H}}, \mathbb{C})$, where $F_{\mathcal{H}}$ is the tensor flip on $\mathcal{H}^{\otimes 2}$ (we simplify notation here using $\pm$ as a subscript to mean $S = \pm F_{\mathcal{H}}$ to remain familiar with notation existing in the literature). Then it is well known that there exists a natural isomorphism
\begin{equation}\label{ExpRel}
\mathcal{F}_{\pm}(\mathcal{H}\oplus \mathcal{K}) \cong \mathcal{F}_{\pm}(\mathcal{H}) \otimes \mathcal{F}_{\pm}(\mathcal{K}).
\end{equation}\par
This isomorphism does not only hold as a Hilbert space isomorphism (which is trivial for the case of infinite dimensional separable Hilbert spaces), but also as an isomorphism of representations of the CCR/CAR algebras.\par
In the following, we will explore to which extent \eqref{ExpRel} generalises to our setting of $S$-symmetric Fock spaces. As a prerequisite for doing so we need to compare $R$-matrices on tensor products and direct sums of Hilbert spaces. The relevant notions are recalled below.
\begin{definition}\emph{\cite{Y-B}}\label{BoxPlus}
Let $\mathcal{H}, \mathcal{K}$ be Hilbert spaces, and $S\in \mathcal{R}_0(\mathcal{H}),  R\in \mathcal{R}_0(\mathcal{K})$. Then we define
\begin{enumerate}[i)]
\item $S\boxplus R : (\mathcal{H} \oplus \mathcal{K}) \otimes (\mathcal{H} \oplus \mathcal{K})  \to (\mathcal{H} \oplus \mathcal{K}) \otimes (\mathcal{H} \oplus \mathcal{K}) $ as
$$S\boxplus R  := S \oplus R \oplus F, \quad \text{on}$$
$$(\mathcal{H} \oplus \mathcal{K}) \otimes (\mathcal{H} \oplus \mathcal{K})  = (\mathcal{H} \otimes \mathcal{H} ) \oplus (\mathcal{K} \otimes \mathcal{K}) \oplus (\mathcal{H} \otimes \mathcal{K}) \oplus (\mathcal{K} \otimes \mathcal{H}).$$
\item 
$S\boxtimes R : \mathcal{H}\otimes \mathcal{K} \otimes \mathcal{H}\otimes \mathcal{K} \to \mathcal{H}\otimes \mathcal{K}\otimes \mathcal{H}\otimes \mathcal{K} $ as
$$S\boxtimes R = F_2(S\otimes R)F_2$$
where $F_2$ exchanges the second and third tensor factors.
\end{enumerate}
We use the terminology \textit{box-sum} and \textit{box-product}, respectively, for these operations.
\end{definition}
\par
These two operations preserve unitarity, involutivity and if $S, R$ are solutions to the Yang-Baxter equation, then $S\boxplus R, S\boxtimes R$ are also solutions and hence we have that $S\boxplus R \in \mathcal{R}_0(\mathcal{H}\oplus \mathcal{K})$ and $S\boxtimes R \in \mathcal{R}_0(\mathcal{H} \otimes \mathcal{K})$ \cite{Y-B}.\par
A generalisation of \eqref{ExpRel} that we will show in this work (for now setting $\mathcal{L} = \mathbb{C}$) now reads as
\begin{equation}\label{NewExpRel}
\mathcal{F}_{S\boxplus R}(\mathcal{H}\oplus \mathcal{K}) \cong \mathcal{F}_{S}(\mathcal{H}) \otimes \mathcal{F}_R(\mathcal{K}),
\end{equation}
where $S\in \mathcal{R}_0(\mathcal{H}), R\in \mathcal{R}_0(\mathcal{K})$ for Hilbert spaces $\mathcal{H}, \mathcal{K}$. Setting $S= F_{\mathcal{H}}, R= F_{\mathcal{K}}$ then $S\boxplus R = F_{\mathcal{H} \oplus \mathcal{K}}$ and the above reads the same as \eqref{ExpRel} for the Bose case, so we at least realise immediately this is consistent with existing results.\par
Under the assumption that all resulting Fock spaces are finite dimensional, the isomorphism \eqref{NewExpRel} has recently been established by Pennig using exponential functors \cite{Ulrich1}. This assumption of finite dimensionality is satisfied for example by $S=-1$ or $S=-F_{\mathcal{H}}$, and more generally for all $R$-matrices having only ``Fermionic" Thoma parameters. For applications to quantum field theory, however, it is essential to have an infinite dimensional Hilbert space, and our setup and arguments in the following will be quite different from that of \cite{Ulrich1}.
\par
In addition, the analysis of equivalent functors in \cite{Ulrich1} also yielded an equivalence between functors when there is a natural equivalence between the $R$-matrices associated to them. We may also wonder if it is possible to formulate such an isomorphism in our setting and to this end, we recall the notion of equivalent $R$-matrices.
\begin{definition}\emph{\cite{Y-B}}
Let $\mathcal{H}, \mathcal{K}$ be Hilbert spaces, $S \in \mathcal{R}_0(\mathcal{H}), R\in \mathcal{R}_0(\mathcal{K})$, then they are said to be equivalent - denoted as $S \sim R$ - if and only if, for each $n \in \mathbb{N}$ the representations $D_n^S$ and $D_n^R$ are unitarily equivalent.
\end{definition}
Let $S\in \mathcal{R}_0(\mathcal{H}), R\in \mathcal{R}_0(\mathcal{K})$. If $S \sim R$ this definition means that there exists a unitary intertwining operator $Y^{S,R}_n : \mathcal{H}_n \to \mathcal{K}_n$ such that
\begin{equation}\label{InterOps}
Y_n^{S,R}D_n^S(\pi)= D_n^R(\pi)Y_n^{S,R}. \qquad \quad (\pi \in \mathfrak{S}_n)
\end{equation}
In general, the form of $Y_n^{S,R}$ is unknown, but we can provide a few examples \cite{Y-B} of when we may write down its action explicitely.
\begin{itemize}
\item\label{type1} \textbf{Type 1:} There exists a unitary $Q$ on $\mathcal{H}$ such that $(Q\otimes Q) S(Q^*\otimes Q^*) = R$. Then $S\sim R$ and we may choose
$$Y_n^{S,R} = Q^{\otimes n}.$$
\item \textbf{Type 2:} There exists a unitary $Q$ on $\mathcal{H}$ such that $[S, Q\otimes Q] =0$ and $(1\otimes Q)S(1\otimes Q^*) = R.$ Then $S\sim R$ and we may choose
$$Y_n^{S,R} = 1\otimes Q \otimes \cdots \otimes Q^{n-1}.$$
\item \textbf{Type 3:} Let $F : \mathcal{H}^{\otimes 2}\to \mathcal{H}^{\otimes 2}$, $F(x\otimes y) = y\otimes x, (x,y \in \mathcal{H})$ be the ``flip" operator, such that $FSF = R$. Then $S\sim R$ and we may choose
$$Y_n^{S,R} = D_n^{FSF}(\iota_n)^{-1}D_n^F(\iota_n)$$
where $\iota_n$ is the total inversion permutation in $n$ letters.
\end{itemize}
\par
The significance of the equivalence relation $\sim$ stems from the fact that every $S\in \mathcal{R}_0(\mathcal{H})$ is equivalent to a very simple $R$-matrix, namely an $R$-matrix of so-called {\em normal form}: It was shown in \cite{Y-B} that for any $S\in \mathcal{R}_0(\mathcal{H})$, there exists $N\in \mathbb{N}$, dimension parameters $d_1, \ldots, d_N \in \mathbb{N}$ with $d_1+\cdots +d_N = d_{\mathcal{H}}$ and signs $\varepsilon_1, \ldots, \varepsilon_N \in \{+1, -1\}$, such that
\begin{equation}\label{Normal}
\displaystyle S \sim \bigbplus_{i=1}^N\  \varepsilon_i 1_{d_i^2}.
\end{equation}
From this definition we can read off that the tensor flip $F_{\mathcal{H}} = \bigbplus_{i=1}^{d_{\mathcal{H}}} 1$ is a normal form, and also the identity (take $N=1$ and $\varepsilon_1 = +1$), as examples. Considering now two equivalent $R$-matrices $S\sim R$, we may wonder whether we have an isomorphism
\begin{equation}\label{EquivFockIso}
\mathcal{F}_S(\mathcal{H}) \cong \mathcal{F}_R(\mathcal{K}),
\end{equation}
where this could simply be an isomorphism of Hilbert spaces, or even an isomorphism of representations of Zamolodchikov algebras.
\par
Since any $R$-matrix $S$ is equivalent to a normal form \eqref{Normal}, the combination of the anticipated isomorphisms \eqref{EquivFockIso} and \eqref{NewExpRel} would allow us to split $\mathcal{F}_S(\mathcal{H})$ into a tensor product of Fock spaces of the simple form $\mathcal{F}_{\pm 1_{d_i^2}}(\mathcal{H}_i)$.\par
As a preparatory step to the next section where we cement these ideas, we note the following results.
\begin{lemma}\label{BoxTimesReps}
Let $\mathcal{H}, \mathcal{K}$ be seperable Hilbert spaces, and $S\in \mathcal{R}_0(\mathcal{H}), R\in \mathcal{R}_0(\mathcal{K})$. Then the representation of the symmetric group, $D_n^{S\boxtimes R},$ generated by $S\boxtimes R$ is unitarily equivalent to $D_n^S\otimes D_n^R$ for any $n\in \mathbb{N}$.
\end{lemma}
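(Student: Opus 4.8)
The plan is to exhibit an explicit unitary that disentangles the two tensor factors and conjugates $D_n^{S\boxtimes R}$ into $D_n^S\otimes D_n^R$. Since both objects are representations of $\mathfrak{S}_n$, and $\mathfrak{S}_n$ is generated by the nearest-neighbour transpositions $\tau_1,\ldots,\tau_{n-1}$, it suffices to establish the intertwining relation on these generators.

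First I would introduce the disentangling unitary
$$W_n : (\mathcal{H}\otimes\mathcal{K})^{\otimes n} \to \mathcal{H}^{\otimes n}\otimes\mathcal{K}^{\otimes n}, \quad W_n\bigotimes_{i=1}^n (h_i\otimes k_i) = \Big(\bigotimes_{i=1}^n h_i\Big)\otimes\Big(\bigotimes_{i=1}^n k_i\Big),$$
which is exactly the analogue of $U_n$ from \eqref{DisUni} with $\mathcal{L}$ replaced by $\mathcal{K}$, and is unitary for the same reason. Next I would compute the action of $D_n^{S\boxtimes R}(\tau_i) = (S\boxtimes R)_{i,n}$ after conjugating by $W_n$. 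Recalling that $S\boxtimes R = F_2(S\otimes R)F_2$, on the $i$-th and $(i+1)$-th copies of $\mathcal{H}\otimes\mathcal{K}$ the inner $F_2$ first reorders the factors into $h_i\otimes h_{i+1}\otimes k_i\otimes k_{i+1}$, then $S\otimes R$ acts as $S$ on the $\mathcal{H}$-pair and $R$ on the $\mathcal{K}$-pair, and the outer $F_2$ restores the interleaved order. Passing everything through $W_n$, the net effect is that $S$ acts on the $i,(i{+}1)$ slots of $\mathcal{H}^{\otimes n}$ and $R$ on the corresponding slots of $\mathcal{K}^{\otimes n}$; in symbols
$$W_n (S\boxtimes R)_{i,n} W_n^* = S_{i,n}\otimes R_{i,n} = (D_n^S\otimes D_n^R)(\tau_i).$$

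Since this holds for every generator $\tau_i$ and both sides are homomorphisms on $\mathfrak{S}_n$, the identity extends to all of $\mathfrak{S}_n$, which proves the claimed unitary equivalence, implemented by $W_n$. The computation is essentially bookkeeping, so the only real obstacle is keeping the tensor-slot indices straight when $F_2$, the disentangling $W_n$, and the embedding $(\cdot)_{i,n}$ are composed. The conceptual content — that $S\boxtimes R$ was defined precisely so that this disentangling factorises the two R-matrices — is what makes the verification on generators clean; no further structural input (such as the Yang-Baxter equation) is needed beyond the already-established fact that both $D_n^{S\boxtimes R}$ and $D_n^S\otimes D_n^R$ are representations of $\mathfrak{S}_n$.
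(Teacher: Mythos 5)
Your proposal is correct and follows essentially the same route as the paper: the paper also introduces the disentangling unitary (its $U_n$, your $W_n$), verifies the intertwining relation only on the generating transpositions $\tau_k$ by unwinding $S\boxtimes R = F_2(S\otimes R)F_2$ in matrix-element form, and concludes by the homomorphism property. No substantive difference.
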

\begin{proof}
We show the result for only the generating elements $\tau_k$, of $\mathfrak{S}_n$. Let $(h_{\alpha})_{\alpha \in \mathbb{N}}$ and $(k_{\beta})_{\beta \in \mathbb{N}}$ as orthonormal bases. Employing the operator $U_n$, an element in the domain of $D_n^{S\boxtimes R}$ is mapped to an element in the domain of $D_n^S\otimes D_n^R$ and the action of the latter operator is given by
\begin{equation*}
\begin{split}
&D_n^S(\tau_k)\otimes D_n^R(\tau_k)  \left(\bigotimes_{i=1}^n h_{\alpha_i}\right) \otimes \left(\bigotimes_{i=1}^n k_{\beta_i}\right) \\
& = S^{\alpha_k \alpha_{k+1}}_{\delta \gamma} R^{\beta_k \beta_{k+1}}_{\eta \xi}\left( \left(\bigotimes_{i=1}^{k-1} h_{\alpha_i} \right) \otimes h_{\delta}\otimes h_{\gamma} \otimes \left(\bigotimes_{i=k+2}^n h_{\alpha_i}\right)\right) \\
& \qquad \qquad \qquad \otimes \left( \left(\bigotimes_{i=1}^{k-1} k_{\beta_i}\right)\otimes k_{\eta} \otimes k_{\xi}\otimes \left(\bigotimes_{i=k+2}^n k_{\beta_i}\right)\right),
\end{split}
\end{equation*}
where the implicit sums converge in norm topology. Applying the linear operator $U_n^*$ to the above gives the action of $D_n^{S\boxtimes R}$ as stated.
\end{proof}
\begin{corollary}\label{Coro1}
Let $\mathcal{H}_1, \mathcal{H}_2$, $\mathcal{K}_1, \mathcal{K}_2$ be Hilbert spaces, $S_1  \in \mathcal{R}_0(\mathcal{H}_1), R_1\in \mathcal{R}_0(\mathcal{K}_1)$, $S_2  \in \mathcal{R}_0(\mathcal{H}_2), R_2 \in \mathcal{R}_0(\mathcal{K}_2)$ such that $S_1 \sim S_2$ and $R_1 \sim R_2$ . Then an intertwiner for $S_1\boxtimes R_1 \sim S_2\boxtimes R_2$, $Y^{S_1\boxtimes R_1, S_2 \boxtimes R_2}$, is given by
$$ Y^{S_1\boxtimes R_1, S_2\boxtimes R_2}_n  = U_n^* \left(Y_n^{S_1, S_2} \otimes Y_n^{R_1, R_2} \right) U_n$$
with $Y^{S_1, S_2}$, $Y^{R_1, R_2}$ intertwiners between $S_1, S_2$ and $R_1, R_2$, respectively.
\end{corollary}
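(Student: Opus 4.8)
The plan is to check that the proposed operator $Y_n:=Y_n^{S_1\boxtimes R_1,S_2\boxtimes R_2}$ satisfies the defining property \eqref{InterOps} of an intertwiner, that is $Y_n D_n^{S_1\boxtimes R_1}(\pi)=D_n^{S_2\boxtimes R_2}(\pi)Y_n$ for every $\pi\in\mathfrak{S}_n$, and to observe separately that $Y_n$ is unitary. The latter is immediate: the disentangling maps $U_n$ are unitary by \eqref{DisUni}, and $Y_n^{S_1,S_2}\otimes Y_n^{R_1,R_2}$ is a tensor product of the two given unitary intertwiners, so $Y_n$ is a composition of unitaries. Thus only the intertwining identity requires work, and that work is purely algebraic.

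The single ingredient needed is the relation underlying Lemma \ref{BoxTimesReps}: for each $\pi\in\mathfrak{S}_n$ and each pair $(S,R)$ one has $U_n D_n^{S\boxtimes R}(\pi)U_n^*=D_n^S(\pi)\otimes D_n^R(\pi)$, equivalently $D_n^{S\boxtimes R}(\pi)=U_n^*\big(D_n^S(\pi)\otimes D_n^R(\pi)\big)U_n$. I would apply this once for the pair $(S_1,R_1)$ and once for $(S_2,R_2)$. A point to keep in mind throughout is that the two copies of $U_n$ act on the different spaces $(\mathcal{H}_1\otimes\mathcal{K}_1)^{\otimes n}$ and $(\mathcal{H}_2\otimes\mathcal{K}_2)^{\otimes n}$, even though the notation does not distinguish them.

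The verification is then a single chain of equalities. Starting from $Y_n D_n^{S_1\boxtimes R_1}(\pi)$, I substitute the definition of $Y_n$ together with $D_n^{S_1\boxtimes R_1}(\pi)=U_n^*(D_n^{S_1}(\pi)\otimes D_n^{R_1}(\pi))U_n$ and cancel the adjacent $U_nU_n^*=1$ on the space-$1$ side, leaving $U_n^*\big(Y_n^{S_1,S_2}\otimes Y_n^{R_1,R_2}\big)\big(D_n^{S_1}(\pi)\otimes D_n^{R_1}(\pi)\big)U_n$. Using the factorisation $(A\otimes B)(C\otimes D)=AC\otimes BD$ together with the two factor intertwining relations $Y_n^{S_1,S_2}D_n^{S_1}(\pi)=D_n^{S_2}(\pi)Y_n^{S_1,S_2}$ and $Y_n^{R_1,R_2}D_n^{R_1}(\pi)=D_n^{R_2}(\pi)Y_n^{R_1,R_2}$, the symmetric-group action is moved to the left of each intertwiner. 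Refactoring and inserting $U_nU_n^*=1$ on the space-$2$ side produces $U_n^*\big(D_n^{S_2}(\pi)\otimes D_n^{R_2}(\pi)\big)U_n\cdot U_n^*\big(Y_n^{S_1,S_2}\otimes Y_n^{R_1,R_2}\big)U_n$, whose first factor is $D_n^{S_2\boxtimes R_2}(\pi)$ by Lemma \ref{BoxTimesReps} and whose second factor is exactly $Y_n$. This is the desired identity.

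Since every step is formal, there is no genuine obstacle; the only thing demanding care is the bookkeeping of which space each $U_n$ lives on, so that the cancellations $U_nU_n^*=1$ are performed on the correct side and the concluding $U_n^*(\dots)U_n$ is recognised as $D_n^{S_2\boxtimes R_2}(\pi)$ rather than its space-$1$ analogue. If one prefers, it suffices to run this computation only for the generating transpositions $\tau_i$, since both $\pi\mapsto D_n^{S_1\boxtimes R_1}(\pi)$ and $\pi\mapsto D_n^{S_2\boxtimes R_2}(\pi)$ are representations and an intertwiner of generators is automatically an intertwiner of all of $\mathfrak{S}_n$.
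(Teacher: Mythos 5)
Your proposal is correct and is precisely the argument the paper intends: its proof of this corollary simply says the claim is ``clear from the definition of $U_n$ and Lemma~\ref{BoxTimesReps}'', and your computation spells out exactly that — conjugating by $U_n$ to convert $D_n^{S\boxtimes R}$ into $D_n^S\otimes D_n^R$ on each side and then applying the factorwise intertwining relations. No gaps; your extra care about which space each $U_n$ acts on is a sensible bookkeeping point the paper leaves implicit.
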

\begin{proof}
This is clear from the definition of $U_n$ and Lemma \ref{BoxTimesReps}.
\end{proof}
\par
We mention as an aside that an analogue of Corollary \ref{Coro1} also holds for box-sums $S_1\boxplus R_1 \sim S_2\boxplus R_2$. As we will not need this here, we omit the details.
\section{Isomorphisms Between Polynomial Algebras and Equivalences of Representations}
We now go on to discuss generalisations of \eqref{ExpRel} and in particular, we consider \eqref{NewExpRel} with the addition of the Hilbert space $\mathcal{L}$ appearing in a tensor product with both $\mathcal{H}, \mathcal{K}$. For $S\in \mathcal{R}_0(\mathcal{H}), R\in \mathcal{R}_0(\mathcal{K})$, define $S\boxtimes F =: \tilde{S} \in \mathcal{R}_0(\tilde{\mathcal{H}})$, for $F$ the tensor flip on $\mathcal{L}\otimes \mathcal{L}$, and similarly for $R$. As mentioned previously, the tilde appearing above $R$-matrices always signifies a box-product with $F$, and above a Hilbert space always means a tensor product with the same space $\mathcal{L}$. With this notation, we will now aim to show the following:
\begin{equation}\label{MyExpRel}
\mathcal{F}_{\tilde{S}\boxplus \tilde{R}}(\tilde{\mathcal{H}}\oplus \tilde{\mathcal{K}}) \cong \mathcal{F}_{\tilde{S}}(\tilde{\mathcal{H}}) \otimes \mathcal{F}_{\tilde{R}}(\tilde{\mathcal{K}}).
\end{equation}
\par
On the left hand side, the GNS representation is already described in Section 3, where we have a space symmetrised by the operator $\tilde{S}\boxplus \tilde{R}$, but so far we have not considered representations of Zamolodchikov operators on $\mathcal{F}_{\tilde{S}}(\tilde{\mathcal{H}}) \otimes \mathcal{F}_{\tilde{R}}(\tilde{\mathcal{K}})$. We will first build data on this space - most notably the analogue of the creation/annihilation operators and vacuum vector. The exchange relations between the former and the cyclicity of the latter will be shown, before a GNS-type argument will prove that they are in fact equivalent representations of the same algebra $\mathcal{Z}(S\boxplus R, \mathcal{L})$. \par
\par
We begin with the algebra $\mathcal{Z}(S\boxplus R, \mathcal{L})$ as given in Section 1, with $S\in \mathcal{R}_0(\mathcal{H}), R\in \mathcal{R}_0(\mathcal{K})$ and note a property regarding distributivity of the box-product over the box-sum. Generally the distributivity property is only known up to equivalence, but in the specific cases considered in this work, we show that we in fact have equality.
\begin{lemma}\label{DistLem}
Let $\mathcal{H}, \mathcal{K}$ be finite dimensional Hilbert spaces, $S\in \mathcal{R}_0(\mathcal{H}), R\in \mathcal{R}_0(\mathcal{K})$. Then
\begin{equation}\label{DistProp}
\left(S\boxplus R\right)\boxtimes F = \tilde{S} \boxplus \tilde{R}.
\end{equation}
\end{lemma}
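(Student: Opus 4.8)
The plan is to prove the operator identity \eqref{DistProp} by a direct computation on elementary tensors, organised according to the block decomposition coming from the direct sum. Both sides are endomorphisms of the same space: using $(\mathcal{H}\oplus\mathcal{K})\otimes\mathcal{L} = \tilde{\mathcal{H}}\oplus\tilde{\mathcal{K}}$, the left-hand side $(S\boxplus R)\boxtimes F$ lies in $\mathcal{R}_0((\mathcal{H}\oplus\mathcal{K})\otimes\mathcal{L})$ while the right-hand side $\tilde{S}\boxplus\tilde{R}$ lies in $\mathcal{R}_0(\tilde{\mathcal{H}}\oplus\tilde{\mathcal{K}})$, and these domains coincide. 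Since both operators are linear, it suffices to compare their action on simple tensors $(m_1\otimes l_1)\otimes(m_2\otimes l_2)$ with $l_1,l_2\in\mathcal{L}$ and each of $m_1,m_2$ lying in either $\mathcal{H}$ or $\mathcal{K}$, as such vectors span $(\tilde{\mathcal{H}}\oplus\tilde{\mathcal{K}})^{\otimes2}$. First I would record the explicit action of an arbitrary box-product with $F$: unwinding $T\boxtimes F = F_2(T\otimes F)F_2$ for $T\in\mathcal{R}_0(\mathcal{M})$ one finds $(T\boxtimes F)(m_1\otimes l_1\otimes m_2\otimes l_2) = m_1'\otimes l_2\otimes m_2'\otimes l_1$, where $T(m_1\otimes m_2) = m_1'\otimes m_2'$ (summation implicit on the right). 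The conjugation by $F_2$ thus has the net effect of swapping $l_1$ and $l_2$ while letting $T$ act on the $\mathcal{M}$-factors.

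The second step is the case analysis, using $S\boxplus R = S\oplus R\oplus F$ on the decomposition $(\mathcal{H}\oplus\mathcal{K})^{\otimes2} = (\mathcal{H}\otimes\mathcal{H})\oplus(\mathcal{K}\otimes\mathcal{K})\oplus(\mathcal{H}\otimes\mathcal{K})\oplus(\mathcal{K}\otimes\mathcal{H})$. When $m_1,m_2\in\mathcal{H}$, one has $(S\boxplus R)(m_1\otimes m_2) = S(m_1\otimes m_2)$, so by the displayed box-product formula the left-hand side acts exactly as $S\boxtimes F = \tilde{S}$; this matches the right-hand side, since on $\tilde{\mathcal{H}}\otimes\tilde{\mathcal{H}}$ the operator $\tilde{S}\boxplus\tilde{R}$ restricts to $\tilde{S}$. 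The case $m_1,m_2\in\mathcal{K}$ is identical with $R$ in place of $S$. The interesting cases are the cross terms: for $m_1\in\mathcal{H}$, $m_2\in\mathcal{K}$ one has $(S\boxplus R)(m_1\otimes m_2) = m_2\otimes m_1$ (the flip into $\mathcal{K}\otimes\mathcal{H}$), so the left-hand side sends $(m_1\otimes l_1)\otimes(m_2\otimes l_2)$ to $(m_2\otimes l_2)\otimes(m_1\otimes l_1)$, and likewise for $m_1\in\mathcal{K}$, $m_2\in\mathcal{H}$.

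The point to verify carefully — and the only place where the identity could a priori degrade from equality to a mere equivalence — is that on the cross terms the right-hand side produces precisely this same full swap. On $\tilde{\mathcal{H}}\otimes\tilde{\mathcal{K}}$ the box-sum $\tilde{S}\boxplus\tilde{R}$ acts as the tensor flip $\tilde{\mathcal{H}}\otimes\tilde{\mathcal{K}}\to\tilde{\mathcal{K}}\otimes\tilde{\mathcal{H}}$, which by definition sends $(m_1\otimes l_1)\otimes(m_2\otimes l_2)$ to $(m_2\otimes l_2)\otimes(m_1\otimes l_1)$, agreeing with the computation above. Conceptually, on the off-diagonal blocks the $\mathcal{L}$-flip contributed by $\boxtimes F$ swaps $l_1\leftrightarrow l_2$ while the cross-term flip of $S\boxplus R$ swaps $m_1\leftrightarrow m_2$, and the composition of these two partial swaps is exactly the full flip of the $\tilde{\mathcal{H}},\tilde{\mathcal{K}}$ factors demanded by the right-hand side. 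I expect the main (and only modest) obstacle to be purely bookkeeping: keeping track of the three distinct flips — $F_2$, the $\mathcal{L}$-flip inside $\tilde{S},\tilde{R}$, and the cross-term flip inside the box-sum — and confirming that on each of the four blocks they assemble into the claimed operator. Since the four cases exhaust a spanning set and agree term by term, the identity \eqref{DistProp} follows.
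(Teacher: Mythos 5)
Your proof is correct and follows essentially the same route as the paper's: identify the two domains, decompose $(\tilde{\mathcal{H}}\oplus\tilde{\mathcal{K}})^{\otimes 2}$ into the four orthogonal blocks coming from the box-sum, and check agreement block by block, with the only delicate point being that on the cross terms the $\mathcal{L}$-flip from $\boxtimes F$ and the $\mathcal{H}$--$\mathcal{K}$ flip from $\boxplus$ compose to the full flip of the $\tilde{\mathcal{H}},\tilde{\mathcal{K}}$ factors. Your explicit formula $(T\boxtimes F)(m_1\otimes l_1\otimes m_2\otimes l_2)=m_1'\otimes l_2\otimes m_2'\otimes l_1$ makes the bookkeeping slightly more transparent than the paper's verbal description, but the argument is the same.
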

\begin{proof}
On the level of the spaces they act on, we first note that 
\begin{equation*}
\begin{split}
\mathcal{D}\left(\left(S\boxplus R\right)\boxtimes F\right)&= \left(\left(\mathcal{H}\oplus \mathcal{K}\right)\otimes \mathcal{L} \right)^{\otimes 2} \\
& \cong (\tilde{\mathcal{H}} \oplus \tilde{\mathcal{K}} )^{\otimes 2} = \mathcal{D}(\tilde{S}\boxplus \tilde{R}),
\end{split}
\end{equation*}
where by $\mathcal{D}(S)$ we mean the domain of $S$.\par
To show they do indeed map vectors in their  respective domains to the same vector, we consider each orthogonal component of their domains and discuss how each operator acts. \par
Firstly, the case of $(\mathcal{H} \otimes \mathcal{L})^{\otimes 2}$, the left hand side of \eqref{DistProp} first applies $F_2$ and then acts as a flip in the $\mathcal{L}^{\otimes 2}$ parts, and as $S\boxplus R$ on $\mathcal{H}^{\otimes 2}$ which by definition simply acts as just $S$, and finally applies a second $F_2$. More simply put, it flips the contributions from $\mathcal{L}$ and acts as $S$ on the contributions from $\mathcal{H}$. The same occurs on the right hand side of \eqref{DistProp}, where we see that it simply acts as only $\tilde{S}$ by definition of the box-sum, which flips the contributions from $\mathcal{L}$ and acts as $S$ on the contributions from $\mathcal{H}$.\par
Similarly, for the case of $(\mathcal{K} \otimes \mathcal{L})^{\otimes 2}$, the left hand side of \eqref{DistProp} flips on $\mathcal{L}$ and acts as $R$ on $\mathcal{K}$. Identically, the right hand side of \eqref{DistProp} acts just as $\tilde{R}$ which again flips on $\mathcal{L}$ and acts as $\mathcal{K}$ and so we equality again.\par
The remaining cases to consider are $\mathcal{H} \otimes \mathcal{L} \otimes \mathcal{K} \otimes \mathcal{L}$ and $\mathcal{K} \otimes \mathcal{L} \otimes \mathcal{H}\otimes \mathcal{L}$. However, since we have a single contribution from both $\mathcal{H}$ and $\mathcal{K}$ appearing, each operator simply reduces to a combination of flips acting on the appropriate spaces and it is easy to realise that $F_2(F_{\mathcal{H}\oplus \mathcal{K}} \otimes F)F_2 = F_{\tilde{\mathcal{H}}\otimes \tilde{\mathcal{K}}}$ and $F_2(F_{\mathcal{K}\oplus \mathcal{H}} \otimes F)F_2 = F_{\tilde{\mathcal{K}}\otimes \tilde{\mathcal{H}}}$. \par
Both sides of \eqref{DistProp} then act in the same way on each orthogonal part of their isomorphic domains, therefore they are equal.
\end{proof}
\par
The algebra of interest in this section, $\mathcal{Z}(S\boxplus R,\mathcal{L})$, is described by the operator $\left(S\boxplus R\right)\boxtimes F$, but now Lemma \ref{DistLem} allows us to work instead with $\tilde{S} \boxplus \tilde{R}$.\par 
We consider the Fock space $\mathcal{F}_{\tilde{S} \boxplus \tilde{R}} (\tilde{\mathcal{H}} \oplus \tilde{\mathcal{K}})$ on which we have a vacuum vector $\Omega_{\tilde{S} \boxplus \tilde{R}}$ and creation/annihilation operators $z_{\tilde{S} \boxplus \tilde{R}}^*, z_{\tilde{S} \boxplus \tilde{R}}$. The latter obey exchange relations involving the operator $\tilde{S} \boxplus \tilde{R},$ which we note here for convenience. We adopt the shorthand notation $\tilde{\mathcal{H}} \ni f_{\alpha} := e_{\alpha} \otimes f$ for $f\in \mathcal{L}$ and basis vectors $e_{\alpha}$ of $\mathcal{H}$, and $\tilde{\mathcal{K}} \ni g_{\xi} := k_{\xi} \otimes g$ for $g\in \mathcal{L}$ and basis vectors $k_{\xi}$ of $\mathcal{K}$. Then
\begin{equation}\label{IntRels1}
z_{\tilde{S} \boxplus \tilde{R}}(f_{\alpha}\oplus 0)z_{\tilde{S} \boxplus \tilde{R}}(g_{\beta}\oplus 0) = S^{\beta \alpha}_{\delta \gamma}z_{\tilde{S} \boxplus \tilde{R}}(g_{\gamma}\oplus 0)z_{\tilde{S} \boxplus \tilde{R}}(f_{\delta}\oplus 0),
\end{equation}
\begin{equation}\label{IntRels2}
z_{\tilde{S} \boxplus \tilde{R}}(f_{\alpha}\oplus 0)z^*_{\tilde{S} \boxplus \tilde{R}}(g_{\beta}\oplus 0) = S^{ \alpha \gamma}_{\beta \delta }z^*_{\tilde{S} \boxplus \tilde{R}}(g_{\gamma}\oplus 0)z_{\tilde{S} \boxplus \tilde{R}}(f_{\delta}\oplus 0) + \delta^{\alpha}_{\beta}\langle f,g\rangle \cdot 1_{\tilde{\mathcal{H}}\oplus \tilde{\mathcal{K}}},
\end{equation}
\begin{equation}\label{IntRels3}
z_{\tilde{S} \boxplus \tilde{R}}(0\oplus f_{\xi})z_{\tilde{S} \boxplus \tilde{R}}(0\oplus g_{\eta}) = R^{\eta \xi}_{\epsilon \pi}z_{\tilde{S} \boxplus \tilde{R}}(0\oplus g_{\pi})z_{\tilde{S} \boxplus \tilde{R}}(0\oplus f_{\epsilon}),
\end{equation}
\begin{equation}\label{IntRels4}
z_{\tilde{S} \boxplus \tilde{R}}(0\oplus f_{\xi})z^*_{\tilde{S} \boxplus \tilde{R}}(0\oplus g_{\eta}) = R^{\xi \pi}_{\eta \epsilon}z^*_{\tilde{S} \boxplus \tilde{R}}(0\oplus g_{\pi})z_{\tilde{S} \boxplus \tilde{R}}(0\oplus f_{\epsilon}) + \delta^{\xi}_{\eta}\langle f,g\rangle\cdot 1_{\tilde{\mathcal{H}}\oplus \tilde{\mathcal{K}}},
\end{equation}
\begin{equation}\label{IntRels5}
z_{\tilde{S} \boxplus \tilde{R}}(f_{\alpha}\oplus 0)z_{\tilde{S} \boxplus \tilde{R}}(0\oplus g_{\eta}) = z_{\tilde{S} \boxplus \tilde{R}}(0\oplus g_{\eta})z_{\tilde{S} \boxplus \tilde{R}}(f_{\alpha}\oplus 0)
\end{equation}
\begin{equation}\label{IntRels6}
z_{\tilde{S} \boxplus \tilde{R}}(f_{\alpha}\oplus 0)z^*_{\tilde{S} \boxplus \tilde{R}}(0\oplus g_{\eta}) = z^*_{\tilde{S} \boxplus \tilde{R}}(0\oplus g_{\eta})z_{\tilde{S} \boxplus \tilde{R}}(f_{\alpha}\oplus 0)
\end{equation}\par
These operators along with the identity $1_{\tilde{\mathcal{H}}\oplus \tilde{\mathcal{K}}}$ generate the polynomial algebra $\mathcal{P}_{\tilde{S}\boxplus \tilde{R}}$ and form our natural Fock representation of the algebra $\mathcal{Z}(S \boxplus R, \mathcal{L})$. \par
We now consider a tensor product of Fock spaces
$$\mathcal{F}_{\tilde{S}}(\tilde{\mathcal{H}})\otimes \mathcal{F}_{\tilde{R}}(\tilde{\mathcal{K}}),$$
and on each Fock space we have creation/annihilation operators $z_{\tilde{S}}^*, z_{\tilde{S}}, z_{\tilde{R}}^*, z_{\tilde{R}}$ acting endomorphically on $\mathcal{F}_{\tilde{S}}^0(\tilde{\mathcal{H}}), \mathcal{F}_{\tilde{R}}^0(\tilde{\mathcal{K}})$ respectively, vacuum vectors $\Omega_{\tilde{S}}, \Omega_{\tilde{R}}$ define similar data on $\mathcal{F}_{\tilde{S}}(\tilde{\mathcal{H}})\otimes \mathcal{F}_{\tilde{R}}(\tilde{\mathcal{K}}):$
\begin{equation}\label{BoxSumOps1}
z_{\tilde{S}, \tilde{R}}\left(f_{\alpha} \oplus g_{\xi}\right) := z_{\tilde{S}, \alpha}(f)\otimes 1_{\tilde{\mathcal{K}}} + 1_{\tilde{\mathcal{H}}} \otimes z_{\tilde{R},\xi}(g), \quad f,g \in \mathcal{L},
\end{equation}
\begin{equation}\label{VacSum}
\Omega_{\tilde{S}, \tilde{R}} := \Omega_{\tilde{S}}\otimes \Omega_{\tilde{R}}.
\end{equation}
The polynomial algebra $\mathcal{P}_{\tilde{S}, \tilde{R}}$ is then defined as the algebra generated by the operators $1_{\tilde{\mathcal{H}}}\otimes 1_{\tilde{\mathcal{K}}}, z^*_{\tilde{S}, \tilde{R}}, z_{\tilde{S}, \tilde{R}}$.
\begin{lemma}\label{CyclicVec}
The vacuum vector $\Omega_{\tilde{S}, \tilde{R}}$ is cyclic for the polynomial algebra $\mathcal{P}_{\tilde{S}, \tilde{R}}$. That is, $\mathcal{P}_{\tilde{S}, \tilde{R}} \Omega_{\tilde{S}, \tilde{R}}$ is dense in $\mathcal{F}_{\tilde{S}}(\tilde{\mathcal{H}}) \otimes \mathcal{F}_{\tilde{R}}(\tilde{\mathcal{K}})$.
\end{lemma}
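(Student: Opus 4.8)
The plan is to reduce the cyclicity of $\Omega_{\tilde{S}, \tilde{R}}$ in the tensor product to the cyclicity results already established factorwise in Proposition \ref{Cyclicity}. Since the annihilation operators satisfy $z_{\tilde{S}, \tilde{R}}(\psi)\Omega_{\tilde{S}, \tilde{R}} = 0$ (each summand in \eqref{BoxSumOps1} annihilates its respective vacuum), it suffices to show that the vectors obtained by applying \emph{words in the creation operators} $z^*_{\tilde{S}, \tilde{R}}$ to $\Omega_{\tilde{S}, \tilde{R}}$ already form a dense set; these lie in $\mathcal{P}_{\tilde{S}, \tilde{R}}\Omega_{\tilde{S}, \tilde{R}}$, so establishing density of this subset finishes the argument regardless of the remaining elements of the algebra.

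First, using linearity of the creation operator in its argument, I would split $z^*_{\tilde{S}, \tilde{R}}(f_\alpha \oplus g_\xi) = z^*_{\tilde{S}, \alpha}(f) \otimes 1_{\tilde{\mathcal{K}}} + 1_{\tilde{\mathcal{H}}} \otimes z^*_{\tilde{R}, \xi}(g)$ into its two ``pure'' parts by choosing the argument in one summand only; both $z^*_{\tilde{S}, \alpha}(f) \otimes 1_{\tilde{\mathcal{K}}} = z^*_{\tilde{S}, \tilde{R}}(f_\alpha \oplus 0)$ and $1_{\tilde{\mathcal{H}}} \otimes z^*_{\tilde{R}, \xi}(g) = z^*_{\tilde{S}, \tilde{R}}(0 \oplus g_\xi)$ therefore belong to $\mathcal{P}_{\tilde{S}, \tilde{R}}$.

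The key observation is that operators of the form $A \otimes 1_{\tilde{\mathcal{K}}}$ and $1_{\tilde{\mathcal{H}}} \otimes B$ commute. Hence an arbitrary word in the creation operators may be reordered, without changing its action, so that all $\tilde{\mathcal{H}}$-type factors stand to the left of all $\tilde{\mathcal{K}}$-type factors, and its action on $\Omega_{\tilde{S}}\otimes\Omega_{\tilde{R}}$ factorizes as
\begin{equation*}
\left(z^*_{\tilde{S}, \alpha_1}(f_1)\cdots z^*_{\tilde{S}, \alpha_m}(f_m)\Omega_{\tilde{S}}\right) \otimes \left(z^*_{\tilde{R}, \xi_1}(g_1)\cdots z^*_{\tilde{R}, \xi_k}(g_k)\Omega_{\tilde{R}}\right).
\end{equation*}
By Proposition \ref{Cyclicity} applied to each tensor factor, the first-factor vectors form a total set in $\mathcal{F}_{\tilde{S}}(\tilde{\mathcal{H}})$ and the second-factor vectors a total set in $\mathcal{F}_{\tilde{R}}(\tilde{\mathcal{K}})$ as $m, k$ and the arguments vary.

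Finally, since the linear span of elementary tensors $\psi \otimes \phi$, with $\psi$ and $\phi$ drawn from total subsets of $\mathcal{F}_{\tilde{S}}(\tilde{\mathcal{H}})$ and $\mathcal{F}_{\tilde{R}}(\tilde{\mathcal{K}})$ respectively, is total in the Hilbert space tensor product, I conclude that $\mathcal{P}_{\tilde{S}, \tilde{R}}\Omega_{\tilde{S}, \tilde{R}}$ is dense. The only points requiring care are the reduction to creation-operator words (justified because we merely need $\mathcal{P}_{\tilde{S}, \tilde{R}}\Omega_{\tilde{S}, \tilde{R}}$ to \emph{contain} a dense subset) and the factorization via commutativity; the substantive input is the factorwise cyclicity of Proposition \ref{Cyclicity} combined with the standard totality of elementary tensors, so I do not expect a genuine obstacle here.
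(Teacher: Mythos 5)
Your proof is correct and follows essentially the same route as the paper's: both rest on the observation that words in the ``pure'' creation operators $z^*_{\tilde{S},\tilde{R}}(f_\alpha\oplus 0)$, $z^*_{\tilde{S},\tilde{R}}(0\oplus g_\xi)$ applied to $\Omega_{\tilde{S}}\otimes\Omega_{\tilde{R}}$ produce, up to normalisation, the vectors $\bigl(P_i^{\tilde{S}}\otimes P_j^{\tilde{R}}\bigr)(f_1\otimes\cdots\otimes f_i\otimes g_1\otimes\cdots\otimes g_j)$, which form a total set. The only difference is presentational --- you exhibit this total set directly via the factorisation and the totality of elementary tensors, while the paper phrases the same computation as a vanishing-orthogonal-complement argument.
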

\begin{proof}
Let $\psi \in \mathcal{F}_{\tilde{S}}(\tilde{\mathcal{H}})\otimes \mathcal{F}_{\tilde{R}}(\tilde{\mathcal{K}})$ be orthogonal to $\mathcal{P}_{\tilde{S}, \tilde{R}}\Omega_{\tilde{S}, \tilde{R}}$. Then for any $i,j \in \mathbb{N}_0$ and vectors $f_1, \ldots, f_i \in \tilde{\mathcal{H}}, g_1, \ldots, g_j \in \tilde{\mathcal{K}}$
\begin{equation*}
\begin{split}
&\langle \psi_{i,j}, z^*_{\tilde{S},\tilde{R}}(f_1\oplus 0) \cdots z^*_{\tilde{S},\tilde{R}}(f_i\oplus 0)z^*_{\tilde{S},\tilde{R}}(0\oplus g_1)\cdots z^*_{\tilde{S},\tilde{R}}( 0\oplus g_j)\Omega_{\tilde{S}}\otimes\Omega_{\tilde{R}}\rangle \\
&= \sqrt{i! j!}\langle \psi_{i,j}, P_i^{\tilde{S}}\otimes P_j^{\tilde{R}}\left( f_1\otimes \cdots \otimes f_i\otimes g_1\otimes \cdots \otimes g_j\right) \rangle \\
&= \sqrt{i! j!}\langle \psi_{i,j}, f_1\otimes \cdots \otimes f_i\otimes g_1\otimes \cdots \otimes g_j \rangle 
\end{split}
\end{equation*}
where $\psi_{i,j}$ is the $i,j$-th component of $\psi$, each letter corresponding to each tensor slot in $\mathcal{F}_{\tilde{S}}(\tilde{\mathcal{H}}) \otimes \mathcal{F}_{\tilde{R}}(\tilde{\mathcal{K}})$ and we have used that the self-adjoint projection $P_i^{\tilde{S}}\otimes P_j^{\tilde{R}}$ leaves the vector $\psi$ invariant. By the definition of the tensor product, the vectors $ f_1\otimes \cdots \otimes f_i\otimes g_1\otimes \cdots \otimes g_j$ form a total set in $\tilde{\mathcal{H}}^{\otimes i} \otimes \tilde{\mathcal{K}}^{\otimes j}$ and hence we conclude that $\psi =0$.
\end{proof}
We are now ready to prove the claimed isomorphism of Fock spaces as representations of $\mathcal{Z}(S\boxplus R, \mathcal{L})$.
\begin{theorem}\label{MainThe}
Let $\mathcal{H}, \mathcal{K}$ be Hilbert spaces of finite dimensions $d_{\mathcal{H}}, d_{\mathcal{K}}$, respectively, and $\tilde{S}\in \mathcal{R}_0(\tilde{\mathcal{H}}), \tilde{R} \in \mathcal{R}_0(\tilde{\mathcal{K}})$, then:
\begin{enumerate}[a)]
\item The map $\pi_{\tilde{S}, \tilde{R}}: \mathcal{Z}(S\boxplus R, \mathcal{L}) \to \mathcal{P}_{\tilde{S}, \tilde{R}}$
\begin{equation*}
\begin{split}
\pi_{\tilde{S}, \tilde{R}}\left(1_{\mathcal{Z}(S\boxplus R, \mathcal{L})}\right)&:= 1_{\tilde{\mathcal{H}}\otimes \tilde{\mathcal{K}}},\\
 \pi_{\tilde{S}, \tilde{R}}(Z_{\alpha}(f))& := \begin{cases} z_{\tilde{S}, \tilde{R}}(f_{\alpha} \oplus 0), \ & \alpha\in \{1, \ldots, d_{\mathcal{H}}\}\\ z_{\tilde{S}, \tilde{R}}(0\oplus f_{\alpha -d_{\mathcal{H}}}), \ & \alpha\in \{d_{\mathcal{H}} +1, \ldots, d_{\mathcal{H}}+d_{\mathcal{K}}\}\end{cases}
\end{split}
\end{equation*}
extends to a unital $*$-representation of $\mathcal{Z}(S\boxplus R, \mathcal{L})$ on $\mathcal{F}^0_{\tilde{S}}(\tilde{\mathcal{H}}) \otimes \mathcal{F}^0_{\tilde{R}}(\tilde{\mathcal{K}})$ with cyclic vector $\Omega_{\tilde{S}, \tilde{R}}$ and
\begin{equation}\label{GNSInt}
\omega_{\tilde{S}, \tilde{R}}(X) = \langle \Omega_{\tilde{S}, \tilde{R}}, \pi_{\tilde{S}, \tilde{R}}(X)\Omega_{\tilde{S}, \tilde{R}}\rangle, \quad X\in \mathcal{Z}(S\boxplus R, \mathcal{L}).
\end{equation}
\item There exists a unitary $V: \mathcal{F}_{\tilde{S}\boxplus \tilde{R}}(\tilde{\mathcal{H}} \oplus \tilde{\mathcal{K}}) \to \mathcal{F}_{\tilde{S}}(\tilde{\mathcal{H}}) \otimes \mathcal{F}_{\tilde{R}}(\tilde{\mathcal{K}})$ such that
\begin{equation}\label{VUni}
V\Omega_{\tilde{S}\boxplus \tilde{R}} = \Omega_{\tilde{S}, \tilde{R}},\qquad  V\pi_{\tilde{S}\boxplus \tilde{R}}(X)V^* = \pi_{\tilde{S}, \tilde{R}}(X), \quad (X\in \mathcal{Z}(S\boxplus R, \mathcal{L})).
\end{equation}
\end{enumerate}
\end{theorem}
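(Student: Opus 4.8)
The plan is to dispatch part (a) by directly verifying the defining relations of $\mathcal{Z}(S\boxplus R, \mathcal{L})$ for the operators $z_{\tilde{S}, \tilde{R}}$, and then to obtain part (b) as an essentially formal consequence of the uniqueness of the GNS construction. By Lemma \ref{DistLem} the algebra $\mathcal{Z}(S\boxplus R, \mathcal{L})$ is governed by $(S\boxplus R)\boxtimes F = \tilde{S}\boxplus\tilde{R}$, so that its exchange relations \eqref{InfRels1}--\eqref{InfRels2}, once split according to the block decomposition of $S\boxplus R$, are precisely the six relations \eqref{IntRels1}--\eqref{IntRels6}. I therefore only need to check that the operators $z_{\tilde{S}, \tilde{R}}(f_\alpha\oplus 0) = z_{\tilde{S}, \alpha}(f)\otimes 1_{\tilde{\mathcal{K}}}$ and $z_{\tilde{S}, \tilde{R}}(0\oplus g_\xi) = 1_{\tilde{\mathcal{H}}}\otimes z_{\tilde{R}, \xi}(g)$ arising from the additive ansatz \eqref{BoxSumOps1} satisfy these six relations on $\mathcal{F}^0_{\tilde{S}}(\tilde{\mathcal{H}})\otimes\mathcal{F}^0_{\tilde{R}}(\tilde{\mathcal{K}})$.

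The verification splits cleanly along the tensor-product structure. The relations \eqref{IntRels1}--\eqref{IntRels2} involve only operators of the form $z_{\tilde{S}, \alpha}(f)\otimes 1_{\tilde{\mathcal{K}}}$; since $z_{\tilde{S}, \alpha}(f)$ already satisfies the relations of $\mathcal{Z}(S, \mathcal{L})$ on $\mathcal{F}_{\tilde{S}}(\tilde{\mathcal{H}})$ by Theorem \ref{theorem:GNS}, and tensoring with $1_{\tilde{\mathcal{K}}}$ preserves every algebraic identity, these hold at once; symmetrically, \eqref{IntRels3}--\eqref{IntRels4} follow from Theorem \ref{theorem:GNS} applied to $\mathcal{F}_{\tilde{R}}(\tilde{\mathcal{K}})$. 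The mixed relations \eqref{IntRels5}--\eqref{IntRels6} are automatic: an operator $A\otimes 1_{\tilde{\mathcal{K}}}$ commutes with $1_{\tilde{\mathcal{H}}}\otimes B$, and there is no contraction term because $\langle f_\alpha\oplus 0, 0\oplus g_\eta\rangle = 0$ in $\tilde{\mathcal{H}}\oplus\tilde{\mathcal{K}}$. Thus $\pi_{\tilde{S}, \tilde{R}}$ is a unital $*$-representation. Cyclicity of $\Omega_{\tilde{S}, \tilde{R}}$ is exactly Lemma \ref{CyclicVec}, and the GNS identity \eqref{GNSInt} follows by checking the three properties of Definition \ref{FinFunDef}: normalisation holds since $\|\Omega_{\tilde{S}, \tilde{R}}\| = 1$, while \eqref{LinFun2} and \eqref{LinFun3} hold because each annihilation operator $z_{\tilde{S}, \tilde{R}}(\,\cdot\,)$ kills $\Omega_{\tilde{S}, \tilde{R}} = \Omega_{\tilde{S}}\otimes\Omega_{\tilde{R}}$, both summands $z_{\tilde{S}, \alpha}(f)\otimes 1$ and $1\otimes z_{\tilde{R}, \xi}(g)$ annihilating their respective vacua.

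For part (b) I would invoke uniqueness of the GNS representation. Applying Theorem \ref{theorem:GNS} to the pair $(\mathcal{H}\oplus\mathcal{K},\, S\boxplus R)$ and using Lemma \ref{DistLem} to identify the symmetriser as $\tilde{S}\boxplus\tilde{R}$, the triple $(\mathcal{F}_{\tilde{S}\boxplus\tilde{R}}(\tilde{\mathcal{H}}\oplus\tilde{\mathcal{K}}),\, \pi_{\tilde{S}\boxplus\tilde{R}},\, \Omega_{\tilde{S}\boxplus\tilde{R}})$ is a cyclic representation of $\mathcal{Z}(S\boxplus R, \mathcal{L})$ whose vacuum functional is $\omega$. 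By part (a), the triple $(\mathcal{F}_{\tilde{S}}(\tilde{\mathcal{H}})\otimes\mathcal{F}_{\tilde{R}}(\tilde{\mathcal{K}}),\, \pi_{\tilde{S}, \tilde{R}},\, \Omega_{\tilde{S}, \tilde{R}})$ is a second cyclic representation realising the \emph{same} functional $\omega$, which is uniquely determined by Definition \ref{FinFunDef}. Standard GNS uniqueness then supplies a unique unitary $V$ sending the first cyclic vector to the second and intertwining the representations, which is exactly \eqref{VUni}.

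The genuinely delicate point is not in (b) --- once (a) is in place, uniqueness does the work --- but in the bookkeeping of (a): one must be sure that the block decomposition \eqref{IntRels1}--\eqref{IntRels6} really exhausts the relations of $\mathcal{Z}(S\boxplus R, \mathcal{L})$ and that Lemma \ref{DistLem} licenses replacing $(S\boxplus R)\boxtimes F$ by $\tilde{S}\boxplus\tilde{R}$ throughout. The main conceptual obstacle is recognising that the vanishing of the mixed inner products forces the cross terms to commute with no anomalous contraction, so that the additive form \eqref{BoxSumOps1} of $z_{\tilde{S}, \tilde{R}}$ --- which is already dictated by linearity in the argument --- is genuinely compatible with the box-sum relations; confirming this compatibility is what makes the tensor-product Fock space carry an honest representation of the box-sum algebra.
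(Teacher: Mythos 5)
Your proposal is correct and follows essentially the same route as the paper: verify the six block relations \eqref{IntRels1}--\eqref{IntRels6} on the tensor product using the already-established relations for $z_{\tilde{S}}$ and $z_{\tilde{R}}$ on each factor (with the mixed relations holding trivially since the operators act in different tensor slots), invoke Lemma \ref{CyclicVec} for cyclicity and Definition \ref{FinFunDef} for the GNS identity, and then obtain $V$ from the fact that both cyclic representations realise the same uniquely determined functional $\omega$. The only cosmetic difference is that the paper writes out the GNS uniqueness argument for part (b) explicitly (the isometry computation $\|\pi_{\tilde{S},\tilde{R}}(X)\Omega_{\tilde{S},\tilde{R}}\|^2=\omega(X^*X)=\|\pi_{\tilde{S}\boxplus\tilde{R}}(X)\Omega_{\tilde{S}\boxplus\tilde{R}}\|^2$ and the intertwining check) where you cite it as standard.
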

\begin{proof}
\begin{enumerate}[a)]
\item We show first that the operators $z_{\tilde{S}, \tilde{R}}^*, z_{\tilde{S}, \tilde{R}}$ satisfy the same relations as $z_{\tilde{S}\boxplus \tilde{R}}^*, z_{\tilde{S}\boxplus \tilde{R}}$ as outlined in \eqref{IntRels1}-\eqref{IntRels6}, firstly noting that
$$z_{\tilde{S}, \tilde{R}}(f_{\alpha}\oplus 0) = z_{\tilde{S},\alpha}(f)\otimes 1_{\tilde{\mathcal{K}}}, \quad z_{\tilde{S}, \tilde{R}}(0\oplus f_{\xi}) = 1_{\tilde{\mathcal{K}}}\otimes z_{\tilde{R}, \xi}(f), \quad (f\in \mathcal{L}),$$
and similarly for $z^*_{\tilde{S}, \tilde{R}}$.\par
Furthermore, the operators $z_{\tilde{S}}^*, z_{\tilde{S}}$ and $z_{\tilde{R}}^*, z_{\tilde{R}}$ satisfy exchange relations \eqref{InfRels1}, \eqref{InfRels2} governed by $S$ and $R$, respectively. Let $f, g \in \mathcal{L}$ then
\begin{equation*}
\begin{split}
z_{\tilde{S}, \tilde{R}}(f_{\alpha}\oplus 0)z_{\tilde{S}, \tilde{R}}(g_{\beta}\oplus 0)& = z_{\tilde{S},\alpha}(f)z_{\tilde{S}, \beta}(g) \otimes 1_{\tilde{\mathcal{K}}} \\
&= S^{\beta \alpha}_{\delta \gamma} z_{\tilde{S}, \gamma}(g)z_{\tilde{S},\delta}(f) \otimes 1_{\tilde{\mathcal{K}}} \\
&= S^{\beta \alpha}_{\delta \gamma} z_{\tilde{S}, \tilde{R}}(g_{\gamma}\oplus 0)z_{\tilde{S}, \tilde{R}}(f_{\delta}\oplus 0),
\end{split}
\end{equation*}
which gives \eqref{IntRels1}. The relation \eqref{IntRels3} follows in an analogous way on the second tensor factor applying \eqref{InfRels1} for $z_{\tilde{R}}$. Similarly
\begin{equation*}
\begin{split}
z_{\tilde{S}, \tilde{R}}(f_{\alpha}\oplus 0)z^*_{\tilde{S}, \tilde{R}}(g_{\beta}\oplus 0) &= z_{\tilde{S},\alpha}(f)z^*_{\tilde{S},\beta}(g) \otimes 1_{\tilde{\mathcal{K}}} \\
&= S^{\alpha \gamma}_{\beta \delta}z^*_{\tilde{S},\gamma}(g)z_{\tilde{S}, \delta}(f) \otimes 1_{\tilde{\mathcal{K}}} + \delta^{\alpha}_{\beta} \langle f,g\rangle 1_{\tilde{\mathcal{H}}}\otimes 1_{\tilde{\mathcal{K}}}\\
&= S^{\alpha \gamma}_{\beta \delta}z_{\tilde{S}, \tilde{R}}^*(g_{\gamma}\oplus 0)z_{\tilde{S}, \tilde{R}}(f_{\delta}\oplus 0) + \delta^{\alpha}_{\beta} \langle f,g\rangle 1_{\tilde{\mathcal{H}}}\otimes 1_{\tilde{\mathcal{K}}}.
\end{split}\end{equation*}
As for \eqref{IntRels5}, \eqref{IntRels6} we see that since $z_{\tilde{S}, \tilde{R}}(f_{\alpha} \oplus 0)$ and $z_{\tilde{S}, \tilde{R}}(0\oplus g_{\eta})$ operate on different tensor slots they commute. Cyclicity of the vacuum vector has been shown in Lemma \ref{CyclicVec} and we note that the annihilation of the normalised vector $\Omega_{\tilde{S}, \tilde{R}}$ by $z_{\tilde{S}, \tilde{R}}$ shows that the functional defined by the right hand side of \eqref{GNSInt} satisfies the properties listed in Definition \ref{FinFunDef} and therefore coincides with $\omega$. 
\item Let $X\in \mathcal{Z}(S\boxplus R, \mathcal{L})$. Then
\begin{equation*}
\begin{split}
\|  \pi_{\tilde{S},\tilde{R}}(X)\Omega_{\tilde{S}, \tilde{R}}\|^2 &= \langle \pi_{\tilde{S},\tilde{R}}(X)\Omega_{\tilde{S}, \tilde{R}},  \pi_{\tilde{S},\tilde{R}}(X)\Omega_{\tilde{S}, \tilde{R}} \rangle \\
&= \langle \Omega_{\tilde{S},\tilde{R}} ,  \pi_{\tilde{S},\tilde{R}}(X^*X)\Omega_{\tilde{S}, \tilde{R}} \rangle  \\
&= \omega(X^*X)\\
&= \| \pi_{\tilde{S}\boxplus \tilde{R}}(X)\Omega_{\tilde{S}\boxplus \tilde{R}} \|^2.
\end{split}
\end{equation*}
This shows that the map $V: \mathcal{P}_{\tilde{S}\boxplus \tilde{S}}\Omega_{\tilde{S}\boxplus\tilde{R}} \to \mathcal{P}_{\tilde{S}, \tilde{R}} \Omega_{\tilde{S}, \tilde{R}}$, $V\pi_{\tilde{S}\boxplus \tilde{R}}(X) \Omega_{\tilde{S}\boxplus \tilde{R}} := \pi_{\tilde{S}, \tilde{R}}(X) \Omega_{\tilde{S}, \tilde{R}}$ ($X\in\mathcal{Z}(S\boxplus R, \mathcal{L})$ is well-defined and isometric. Moreover, for $X, Y \in \mathcal{Z}(S\boxplus R, \mathcal{L})$
\begin{equation*}
\begin{split}
V\pi_{\tilde{S}\boxplus \tilde{R}}(X) V^* \pi_{\tilde{S}, \tilde{R}}(Y) \Omega_{\tilde{S}, \tilde{R}} &= V\pi_{\tilde{S}\boxplus \tilde{R}}(X) \pi_{\tilde{S}\boxplus \tilde{R}}(Y)  \Omega_{\tilde{S}\boxplus \tilde{R}}\\
&= V\pi_{\tilde{S}\boxplus \tilde{R}}(XY) \Omega_{\tilde{S}\boxplus \tilde{R}} \\
&=  \pi_{\tilde{S}, \tilde{R}}(XY) \Omega_{\tilde{S}, \tilde{R}}\\
& =  \pi_{\tilde{S}, \tilde{R}}(X) \pi_{\tilde{S}, \tilde{R}}(Y) \Omega_{\tilde{S}, \tilde{R}}.
\end{split}
\end{equation*}
Since $\Omega_{\tilde{S}\boxplus \tilde{R}}$ is cyclic for the representation $\pi_{\tilde{S}\boxplus \tilde{R}}$ and $\Omega_{\tilde{S}, \tilde{R}}$ is cyclic for $\pi_{\tilde{S}, \tilde{R}}$, it follows that $V$ extends to a unitary $\mathcal{F}_{\tilde{S}\boxplus \tilde{R}}(\tilde{\mathcal{H}}\oplus \tilde{\mathcal{K}}) \to \mathcal{F}_{\tilde{S}}(\tilde{\mathcal{H}}) \otimes \mathcal{F}_{\tilde{R}}(\tilde{\mathcal{K}})$ satisfying \eqref{VUni}.
\end{enumerate}
\end{proof}
As a corollary, we note the simple form of $S$-symmetrised Fock spaces in the case of $S$ a normal form as previously anticipated.
\begin{corollary}
Let $S\in R_0(\mathcal{H})$ be an involutive $R$-matrix of normal form \eqref{Normal}. Then there is a unitary $$V: \mathcal{F}_{\tilde{S}}(\tilde{\mathcal{H}}) \to \bigotimes_{i=1}^N \mathcal{F}_{\tilde{\varepsilon}_i}(\tilde{\mathcal{H}}_i),$$
where $\mathcal{H} = \bigoplus_{i=1}^N \mathcal{H}_i$.
\end{corollary}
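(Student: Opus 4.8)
The plan is to induct on the number $N$ of blocks, using Theorem~\ref{MainThe} to peel off one summand at each step and Lemma~\ref{DistLem} to keep track of the symmetriser. Being of normal form \eqref{Normal} means $S = \bigbplus_{i=1}^N \varepsilon_i 1_{d_i^2}$ acts on $\mathcal{H} = \bigoplus_{i=1}^N \mathcal{H}_i$ with $\dim\mathcal{H}_i = d_i$, where $\tilde{\varepsilon}_i$ abbreviates $(\varepsilon_i 1_{d_i^2})\boxtimes F$ and $\tilde{\mathcal{H}}_i = \mathcal{H}_i\otimes\mathcal{L}$. The case $N=1$ is trivial. For the inductive step I would regroup $S = S' \boxplus (\varepsilon_N 1_{d_N^2})$, where $S' := \bigbplus_{i=1}^{N-1}\varepsilon_i 1_{d_i^2} \in \mathcal{R}_0(\mathcal{H}')$ and $\mathcal{H}' := \bigoplus_{i=1}^{N-1}\mathcal{H}_i$, so that $S'$ is again a normal form with $N-1$ blocks.

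Next I would invoke Lemma~\ref{DistLem} on this splitting to obtain $\tilde{S} = \tilde{S'}\boxplus\tilde{\varepsilon}_N$, together with the identification $\tilde{\mathcal{H}} = (\mathcal{H}'\oplus\mathcal{H}_N)\otimes\mathcal{L} = \tilde{\mathcal{H}'}\oplus\tilde{\mathcal{H}}_N$. Theorem~\ref{MainThe}, applied with $S'$ and $\varepsilon_N 1_{d_N^2}$ in the roles of $S$ and $R$, then supplies a unitary $\mathcal{F}_{\tilde{S}}(\tilde{\mathcal{H}}) \to \mathcal{F}_{\tilde{S'}}(\tilde{\mathcal{H}'}) \otimes \mathcal{F}_{\tilde{\varepsilon}_N}(\tilde{\mathcal{H}}_N)$ which sends vacuum to vacuum and intertwines the two Fock representations of $\mathcal{Z}(S\boxplus R,\mathcal{L})$. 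By the inductive hypothesis there is a unitary $\mathcal{F}_{\tilde{S'}}(\tilde{\mathcal{H}'}) \to \bigotimes_{i=1}^{N-1}\mathcal{F}_{\tilde{\varepsilon}_i}(\tilde{\mathcal{H}}_i)$; tensoring it with the identity on the last factor and composing gives the desired $V: \mathcal{F}_{\tilde{S}}(\tilde{\mathcal{H}}) \to \bigotimes_{i=1}^N\mathcal{F}_{\tilde{\varepsilon}_i}(\tilde{\mathcal{H}}_i)$, which is again an intertwiner since each constituent is one.

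The only genuinely delicate point, and the step I expect to require care, is the regrouping: I must know that the iterated box-sum is associative up to the canonical identification $\mathcal{H} = \mathcal{H}'\oplus\mathcal{H}_N$, so that $\bigbplus_{i=1}^N \varepsilon_i 1_{d_i^2}$ really equals $S'\boxplus(\varepsilon_N 1_{d_N^2})$ and the split-off piece $S'$ is itself a normal form. This follows from the explicit block description in \cite{Y-B}: on the decomposition of $(\bigoplus_i\mathcal{H}_i)^{\otimes 2}$ into the pieces $\mathcal{H}_i\otimes\mathcal{H}_j$, the operator $\bigbplus_i \varepsilon_i 1_{d_i^2}$ acts as $\varepsilon_i 1_{d_i^2}$ on each diagonal block and as the tensor flip on each off-diagonal block, a description that is symmetric in the indices and insensitive to bracketing. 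One should also observe that Lemma~\ref{DistLem}, though stated for two summands, applies verbatim here, since its proof only used the block structure of a single box-sum. Granting these points, the result is a formal consequence of Theorem~\ref{MainThe} and the functoriality of tensoring unitaries.
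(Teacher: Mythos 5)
Your proposal is correct and follows exactly the route the paper intends: the corollary is stated as an immediate consequence of Theorem~\ref{MainThe} (iterated via Lemma~\ref{DistLem} to peel off one block of the normal form at a time), and your induction on $N$, including the remark on the bracketing-insensitive block description of the iterated box-sum, is precisely the argument needed to make that precise.
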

By the results of \cite{Y-B}, we know that any involutive $R$-matrix is {\em equivalent to} an $R$-matrix of normal form. This motivates to ask what we can say about the relation of the Hilbert spaces and the representations of $\mathcal{Z}(S, \mathcal{L}), \mathcal{Z}(R, \mathcal{L})$ for two equivalent $R$-matrices $S\sim R$. If they are equivalent, we have unitary intertwiners $Y_n^{\tilde{S},\tilde{R}}$ \eqref{InterOps}, and their direct sum $Y^{\tilde{S}, \tilde{R}} = \bigoplus_n Y_n^{\tilde{S}, \tilde{R}}$ defines a unitary $\mathcal{F}_{\tilde{S}}(\tilde{\mathcal{H}}) \to \mathcal{F}_{\tilde{R}}(\tilde{\mathcal{K}})$.\par
In some cases, this Hilbert space isomorphism also intertwines the actions of the Zamolodchikov operators as we now discuss in two examples.
\begin{example}\label{EquivThe}
Let $\tilde{S}\in \mathcal{R}_0(\tilde{\mathcal{H}}), \tilde{R}\in \mathcal{R}_0(\tilde{\mathcal{K}})$ with $S\sim R$ (in the type 1 sense) such that we may choose their intertwiner to be of the form
\begin{equation}\label{StrongInter}
Y_n^{S,R} = \left(Y_1^{S,R}\right)^{\otimes n}.
\end{equation}
Then 
\begin{equation}\label{StrongIso}
Y^{\tilde{S},\tilde{R}}\pi_{\tilde{S}}(Z_{\alpha}(f))\left(Y^{\tilde{S},\tilde{R}}\right)^* = \pi_{\tilde{R}}\left(Y_1^{\tilde{S},\tilde{R}}Z_{\alpha}(f)\right).
\end{equation}
\par
Indeed, we can check by calculation -  Let $f \in \mathcal{L}$, then by Corollary \ref{Coro1} we have the expression $Y^{\tilde{S}, \tilde{R}} = U^*\left(Y^{S,R} \otimes 1\right) U$ where we note that the intertwiner acts trivially on $\mathcal{L}$. This gives
\begin{equation*}
\begin{split}
&U^*\left(Y^{S,R} \otimes 1\right)U z_{\tilde{S}, \alpha}(f)U^*\left(\left(Y^{S,R}\right)^* \otimes 1\right)U \\
&\qquad \qquad  =U^*\left(Y^{S,R} \otimes 1\right) z_S(e_{\alpha}) \otimes z_F(f) \left(\left(Y^{S,R}\right)^* \otimes 1\right)U\\
& \qquad \qquad = U^* \left( Y^{S,R}z_S(e_{\alpha})\left(Y^{S,R}\right)^* \otimes z_F(f) \right) U
\end{split}
\end{equation*}
What remains to be shown is to calculate the action of $Y^{S,R}z_S(e_{\alpha})\left(Y^{S,R}\right)^*$. Let $\psi \in \mathcal{F}_S(\mathcal{H})$ then
\begin{equation*}
\begin{split}
Y^{S,R}_{n+1}z^*_S(e_{\alpha})\left(Y^{S,R}_n\right)^*\psi_n &= Y^{S,R}_{n+1} P_{n+1}^S e_{\alpha} \otimes \left(\left[Y_1^{S,R}\right]^*\right)^{\otimes n}\psi_n \\
&= P_{n+1}^R \left(Y_{1}^{S,R}\right)^{\otimes (n+1)} \left(e_{\alpha} \otimes \left(\left[Y_1^{S,R}\right]^*\right)^{\otimes n}\psi_n\right) \\
&= P_{n+1}^R\left( Y_1^{S,R}e_{\alpha} \right)\otimes \psi_n\\
&= z^*_R\left(Y_1^{S,R} e_{\alpha}\right) \psi_n.
\end{split}
\end{equation*}
Thus the adjoint action of $Y^{\tilde{S}, \tilde{R}}$ results in an isomorphism between elements of the polynomial algebras $\mathcal{P}_{\tilde{S}}, \mathcal{P}_{\tilde{R}}$.
\end{example}
\par
It must be mentioned, however, that the isomorphism of the Hilbert spaces given by $Y^{\tilde{S}, \tilde{R}}$ does {\em not} always give an isomorphism of Zamolodchikov representations. As a counter example, we restrict to dimension two and consider $S= -F_{\mathcal{H}} \sim -1\boxplus -1 = R$. In the Fock representation of $\mathcal{Z}(S, \mathcal{L})$ we have anti-commutation between all annihilation operators, i.e.
$$\left\{ z_{\tilde{S}, \alpha}(f) , z_{\tilde{S}, \beta}(g) \right\} = 0$$
for $\alpha, \beta \in \{1, 2\}$ and all $f,g \in \mathcal{L}$. If an isomorphism between the representations of $\mathcal{Z}(S, \mathcal{L})$ and $\mathcal{Z}(R, \mathcal{L})$ existed, this anti-commutation would be preserved, however for some choices of $\alpha$ and $\beta$ we actually have commutation in the representation of $\mathcal{Z}(R, \mathcal{L})$
$$\left[ z_{\tilde{R}, 1}(f), z_{\tilde{R}, 2}(g) \right] =0$$
for all $f,g \in \mathcal{L}$. A quick calculation shows the product $z_{\tilde{R}, 1}(f) z_{\tilde{R}, 2}(g)$ is not zero and thus the representations $\pi_{\tilde{S}}, \pi_{\tilde{R}}$ are not isomorphic in this case.

\par
\section{Outlook to Quantum Field Theory}\label{section:outlook}
Our work is largely motivated by applications to short distance scaling limits of integrable quantum field theories on two-dimensional Minkowski space, which we sketch now.\par
At finite scale, such quantum field theories can be described in terms of a mass parameter $m>0$ and an $S$-matrix (see Definition \ref{SMatrix} below) that describes $2\to 2$ collision processes in this model \cite{Lechner}. This $S$-matrix depends on a parameter $\theta$, the rapidity difference of the two scattering particles, which is related to the on-shell momentum by
$$p(\theta) = m\begin{pmatrix} \cosh(\theta) \\ \sinh(\theta) \end{pmatrix}.$$
Simplifying slightly from the general situation \cite{5}, we assume that our model contains only a single species of massive neutral particles. Then the appropriate definition of an $S$-matrix is the following (with the usual notation $S(a,b):=\{\zeta\in{\mathbb C}\,:\,a<\Im\zeta<b\}$ for open strips in the complex plane).
\begin{definition}\label{SMatrix}
An $S$-matrix is a continuous bounded function $S: \overline{S(0,\pi)} \to \mathcal{B}(\mathcal{H}\otimes \mathcal{H})$ which is analytic in the interior of the strip and satisfies for all $\theta, \theta' \in \mathbb{R},$ and $\alpha, \beta, \delta, \gamma \in \{1,\ldots, d_{\mathcal{H}}\}$,
\begin{enumerate}[a)]
\item Unitarity:
$$S(\theta)^* = S(\theta)^{-1},$$
\item Hermitian Analyticity:
$$S(\theta)^{-1} = S(-\theta),$$
\item Solution to the Yang-Baxter equation:
$$(S(\theta) \otimes 1_{\mathcal{H}})(1_{\mathcal{H}} \otimes S(\theta +\theta')(S(\theta') \otimes 1_{\mathcal{H}}) = (1_{\mathcal{H}} \otimes S(\theta'))(S(\theta + \theta') \otimes 1_{\mathcal{H}})(1_{\mathcal{H}} \otimes S(\theta)),$$
\item Crossing symmetry:
$$S^{\alpha \beta}_{\delta \gamma}(\theta) = S^{\delta \alpha}_{\gamma \beta}(i\pi - \theta).$$
\end{enumerate}
\end{definition}
\par
We recall that to any mass $m>0$ and any $S$-matrix $S$ satisfying certain regularity and intertwiner properties, a corresponding QFT exists that has $S$ as its 2-particle $S$-matrix \cite{Alazzawi:2016set, SabArt}. The vacuum Hilbert space of this model is defined as an $S$-symmetric Fock space over the one-particle space $\tilde{\mathcal H}={\mathcal H}\otimes\mathcal L$ with ${\mathcal L}=L^2({\mathbb R},d\theta)$, and the defining (wedge-local) quantum fields are sums of Zamolodchikov creation and annihilation operators. This is analogous to our procedure in Section 2 up to the essential difference that $S$ does not only act on $\mathcal H$ due to its dependence on the rapidity $\theta$.

\par
Proceeding to the scaling limit, one can generalise the analysis of \cite{Scaling} to show that the existence of a short distance scaling limit requires the two limits $$S_\pm := \lim_{\theta \to \pm \infty}S(\theta) $$ to exist \cite{Me}. This is, however, only a necessary condition -- It is not yet clear which additional assumptions $S$ must satisfy to guarantee that the scaling limit theory has physically reasonable properties, or which of these scaling limits are isomorphic to interaction-free theories (asymptotic freedom).

As expected of a short distance limit, the scaling limit theory is massless and (twisted) chiral. Some general properties of this limit theory can be described in terms of the matrices $S_0:= S(0)$ and $S_\pm$. In the context of the present paper, it is interesting to realise that these three matrices do no longer depend on $\theta$ and hence are unitary solutions of the Yang-Baxter equation \eqref{Y-BC} without spectral parameter. Furthermore, in view of Hermitian Analyticity, $S_0$ is involutive (that is, $S_0\in\mathcal{R}_0(\mathcal{H})$) and $S_+^*=S_-$.

The physical significance of $S_0,S_{\pm}$ can be described as follows \cite{Me}. The role of $S_0$ is to describe a possible twisting (commutation/anticommutation relations) of the two chiral halves of the scaling limit theory. The limits $S_\pm$, on the other hand, present obstructions to the existence of non-trivial operators localized in (lightlike) intervals on the two chiral halves. These obstructions are given either by representations of the symmetric groups (if $S_+=S_-$ is involutive), or, more generally, representations of the braid groups (if $S_+\neq S_-$ is not involutive).

The scalar case, i.e. $\dim(\mathcal{H})=1$, was analysed in \cite{Scaling}. In that setting, the only possible limits are involutive, namely $S_+=S_-$ coincides with either $+1$ or $-1$. In case the (scalar) S-matrix is independent of $\theta$, that is $S(\theta)=\pm1$ for all $\theta\in\mathbb{R}$, also the scaling limit theory is completely understood (it is the free $U(1)$ current for $S=+1$ and related to the Ising model for $S=-1$). 

In the general non-scalar case considered here, several natural questions come up: 1)~Do the limits $S_+$ and $S_-$ necessarily coincide? (which would mean that they are involutive) 2)~What are all possible limits of $S_\pm$ of S-matrices? 3)~What are all constant ($\theta$-independent) solutions of Definition~\ref{SMatrix}?

\par
\vspace{5mm}

Regarding 2) and 3), note that the condition of crossing symmetry has not appeared in our previous discussions. We will call a constant $R$-matrix crossing-symmetric if
$$S^{\alpha \beta}_{\delta \gamma} = S^{\delta \alpha}_{\gamma \beta}$$
for all $\alpha,\beta,\gamma,\delta$ labeling an orthonormal basis of $\mathcal H$ (see also \cite{Lyubashenko:1987} for a basis-independent formulation of this concept).

In the remainder of this paper we will discuss some example classes of S-matrices: The first class are the involutive normal forms \eqref{Normal} as a set of possible constant $S$-matrices, and we study their suitability as candidates for limits of certain parameter-dependent $S$-matrices. The second class are the so-called diagonal $S$-matrices, which we will show to always have coinciding limits $S_+=S_-$, and the last class are all constant R-matrices in dimension 2, for which the condition of crossing symmetry can be understood completely.
\begin{proposition}
Let $S\in \mathcal{R}_0(\mathcal{H})$ be of normal form:
$$S= \bigbplus_{i=1}^N\  \varepsilon_i 1_{d_i^2}$$
for some $N\in \mathbb{N}$, each $\varepsilon_i \in \{+1, -1\}$ and $d_1+\cdots +d_N = d_{\mathcal{H}}$. Then $S$ is crossing symmetric if and only if $S$ is of diagonal form, i.e. $d_i=1$ for all $i$.
\end{proposition}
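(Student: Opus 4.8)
The plan is to work entirely at the level of matrix elements in a basis adapted to the block decomposition, and to reduce the crossing relation to a single pair of indices lying in a common block. First I would fix the decomposition $\mathcal{H}=\bigoplus_{i=1}^N\mathcal{H}_i$ with $\dim\mathcal{H}_i=d_i$ underlying the normal form and choose an orthonormal basis $(e_\alpha)$ of $\mathcal{H}$ subordinate to it, writing $\alpha\in B_i$ when $e_\alpha\in\mathcal{H}_i$. Unwinding the definition of the box-sum (Definition~\ref{BoxPlus}), $S$ acts as $\varepsilon_i\cdot 1$ on $\mathcal{H}_i\otimes\mathcal{H}_i$ and as the tensor flip on $\mathcal{H}_i\otimes\mathcal{H}_j$ for $i\neq j$. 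Hence the matrix elements are
\[
S^{\alpha\beta}_{\delta\gamma}=\varepsilon_i\,\delta^{\alpha}_{\delta}\delta^{\beta}_{\gamma}\quad(\delta,\gamma\in B_i),\qquad S^{\alpha\beta}_{\delta\gamma}=\delta^{\alpha}_{\gamma}\delta^{\beta}_{\delta}\quad(\delta\in B_i,\ \gamma\in B_j,\ i\neq j).
\]
Recording these two formulas cleanly is the bulk of the set-up; everything after is index bookkeeping against the crossing relation $S^{\alpha\beta}_{\delta\gamma}=S^{\delta\alpha}_{\gamma\beta}$.

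For the implication that a diagonal normal form is crossing symmetric, I would assume $d_i=1$ for all $i$, so that distinct basis indices always lie in distinct blocks. The only nonzero elements are then $S^{aa}_{aa}=\varepsilon_a$ and $S^{ab}_{ba}=1$ for $a\neq b$, and I would verify the crossing relation on each: the diagonal element is fixed by crossing, while the crossing partner of $S^{ab}_{ba}$ is $S^{ba}_{ab}$, which is again of flip type and equals $1$. Since the crossing map $(\alpha,\beta;\delta,\gamma)\mapsto(\delta,\alpha;\gamma,\beta)$ permutes the finite set of index tuples and sends nonzero elements to nonzero elements of equal value, it restricts to a bijection of the nonzero set, hence also of the zero set; agreement on all tuples, and therefore crossing symmetry, follows.

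For the converse I would argue by contraposition, exhibiting a single violated instance. Assuming some block has $d_i\geq 2$, pick distinct $a,b\in B_i$. The same-block formula gives $S^{ab}_{ab}=\varepsilon_i\neq 0$, whereas its crossing partner $S^{aa}_{bb}$ has lower indices $b,b$ lying in block $i$ and so equals $\varepsilon_i\,\delta^{a}_{b}\delta^{a}_{b}=0$ because $a\neq b$. Thus the crossing relation would force $\varepsilon_i=0$, which is impossible; hence crossing symmetry fails unless every $d_i=1$.

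I expect no serious obstacle here. The one point requiring care is the relabelling in the crossing map and, in particular, keeping straight that two indices in a \emph{common} block trigger the $\varepsilon_i$-identity action rather than the flip; conflating these is the only way the argument could go wrong. The conceptual crux is simply that an off-diagonal entry $S^{ab}_{ab}$ \emph{within} a block of size at least two is precisely what crossing symmetry cannot tolerate, since crossing ties it to the vanishing entry $S^{aa}_{bb}$.
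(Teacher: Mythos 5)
Your proof is correct and follows essentially the same route as the paper: a direct check of the nonzero matrix elements for the ``if'' direction, and for the ``only if'' direction the observation that a block with $d_i\geq 2$ forces $S$ to act as $\pm\mathrm{id}$ on $V\otimes V$ with $\dim V>1$, which cannot be crossing symmetric. Your explicit violated entry $S^{ab}_{ab}=\varepsilon_i\neq 0=S^{aa}_{bb}$ is a concrete instance of the paper's remark that the crossing partner of the identity is not even invertible.
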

\begin{proof}
``If": If $S$ is a diagonal normal form, the only non-zero entries are ones $S^{\alpha \alpha}_{\alpha \alpha}$ and $S^{\alpha \beta}_{\beta \alpha} = S^{\beta \alpha}_{\alpha \beta}$ by definition, so crossing symmetry is easily realised.\par
``Only if'': Suppose that $d_i>1$ for some $i$. Then there is a subspace $V$ of $\mathcal H$, of dimension $\dim V>1$, such that $R$ acts as $\pm\text{id}_{V\otimes V}$ on this subspace. But the identity in dimension larger one is not crossing symmetric. In fact, its ``crossing partner'' $\hat{1}^{\alpha\beta}_{\gamma\delta}:=1^{\delta\alpha}_{\gamma\beta}=\delta^{\delta}_\gamma\delta^\alpha_\beta$ is not even invertible.
\end{proof}

This results motivates us to consider $\theta$-dependent diagonal S-matrices. We first introduce two suitable classes of holomorphic functions.

\begin{definition}
\begin{enumerate}[i)]
\item A bounded continuous function $f: \overline{S(0,\pi)} \to \mathbb{C}$ which is analytic in the interior of the strip is said to be {\em regular} if it extends to a bounded analytic function on the open strip $S(-\kappa, \pi+ \kappa)$ for some $\kappa >0$.
\item The set of functions $\mathcal{G}_{\text{lim}}$ is the set of regular functions $G:\overline{S(0,\pi)} \to \mathbb{C}$ that satisfy for all $\theta \in \mathbb{R}$:
\begin{equation}
\label{GProps}
|G(\theta)|=1,\ G(\theta) = \overline{G(i\pi +\theta)}, \ \lim_{\theta \to \pm \infty} G(\theta) \ \text{exist}.\end{equation}
\end{enumerate}
\end{definition}
\begin{remark} Note that with the additional symmetry 
$$G(-\theta)= \overline{G(\theta)}$$
the function $G$ actually belongs to a special class of functions known as {\em scattering functions}. The class of all scattering functions with limits $\lim_{\theta \to \pm \infty} G(\theta)$ is explicitly known as certain symmetric finite Blaschke products \cite{Scaling}. We are interested in the more general class $\mathcal{G}_{\text{lim}}$ because these functions appear as matrix elements of diagonal S-matrices.
\end{remark}
\begin{definition}
A \textit{diagonal S-matrix with limits} $S_D$ is of the form
\begin{equation}
\label{DiagonalMatrix}
S_D(\theta)^{\alpha \beta}_{\gamma \delta} = \omega_{\alpha \beta}(\theta) \delta^{\alpha}_{\delta}\delta^{\beta}_{\gamma},
\end{equation}
(with no summation over $\alpha$ and $\beta$) with $\omega_{\alpha \beta} \in \mathcal{G}_{\text{lim}}$ for all $\alpha, \beta$ and $\omega_{\alpha \alpha}(-\theta) = \overline{\omega_{\alpha \alpha}(\theta)}$.
\end{definition}
\begin{remark}
It is well known that diagonal S-matrices satisfy all requirements of Definition~\ref{SMatrix} \cite{Alazzawi:2016set}, so we can treat them as a simple example of an S-matrix without restricting ourselves to low dimensions.
\end{remark}
\par

\begin{proposition}
The set $\mathcal{G}_{\textup{lim}}$ consists precisely of the functions
\begin{equation}\label{GFunction}
G(\zeta) = \epsilon \prod_{k=1}^N \frac{e^{\zeta} - e^{z_k}}{e^{\zeta}-e^{\overline{z_k}}}\cdot \frac{e^{\zeta} - e^{\overline{z_k}+i\pi}}{e^{\zeta} - e^{z_k-i\pi}},
\end{equation}
where $\epsilon =\pm 1$, $N\in \mathbb{N}_0$ and $\{ z_1, \ldots, z_N\}$ is a set of complex numbers in the strip $0< \Im z_1,\ldots, \Im z_N \leq \pi/2$.\par
Moreover, for each $G\in \mathcal{G}_{\text{lim}}$ we have that $$\lim_{\theta \to \infty} G(\theta) = \lim_{\theta \to- \infty} G(\theta) = \pm 1.$$
\end{proposition}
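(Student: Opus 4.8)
The plan is to transport the problem to the upper half-plane. The map $\zeta\mapsto w=e^\zeta$ is a biholomorphism of the open strip $S(0,\pi)$ onto $\mathbb{H}=\{\Im w>0\}$ sending the lower edge $\{\Im\zeta=0\}$ to $(0,\infty)$ and the upper edge $\{\Im\zeta=\pi\}$ to $(-\infty,0)$. Setting $g(e^\zeta):=G(\zeta)$ defines a bounded analytic function $g$ on $\mathbb{H}$. The normalisation $|G(\theta)|=1$ becomes $|g|=1$ on $(0,\infty)$, while the relation $G(\theta)=\overline{G(i\pi+\theta)}$ becomes the reflection identity $g(w)=\overline{g(-\bar w)}$ (first on the positive axis, then on all of $\mathbb{H}$ by the identity theorem); in particular $|g|=1$ on $(-\infty,0)$ as well, so $g$ is inner. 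Regularity of $G$ makes $g$ extend analytically across $\mathbb{R}\setminus\{0\}$, and the hypothesis that $\lim_{\theta\to\pm\infty}G(\theta)$ exist says precisely that $g$ has limits as $w\to 0^+$ and $w\to+\infty$; since $|g|=1$ on $(0,\infty)$ these limits are unimodular, hence nonzero.

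The crux is to show that $g$ is a finite Blaschke product. First I would prove that $g$ has only finitely many zeros: they cannot accumulate in $\mathbb{R}\setminus\{0\}$ because $g$ continues analytically there, and they cannot run off to the two ends either. For the latter I translate, considering $G(\cdot+t)$ as $t\to\pm\infty$; these are uniformly bounded, so by Montel's theorem a subsequence converges locally uniformly, and since the limit equals the constant $\lim_{\theta\to\pm\infty}G(\theta)$ on an entire boundary line it must be that (unimodular, nonzero) constant by the identity theorem. Thus $g$ is bounded away from $0$ near $0$ and $\infty$, and together with $|g|=1$ on the boundary the zeros $b_1,\dots,b_M\in\mathbb{H}$ are finite in number. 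I then form the finite Blaschke product $B(w)=\prod_{j}\tfrac{w-b_j}{w-\overline{b_j}}$ and put $Q:=g/B$. Because numerator and denominator share exactly the same zeros, $Q$ and $1/Q$ are bounded analytic on the strip with unimodular boundary values and nonzero limits at the ends; the Phragmén–Lindelöf principle then forces $|Q|\le 1$ and $|1/Q|\le 1$, so $|Q|\equiv 1$ and $Q$ is a unimodular constant $c$.

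It remains to read off the stated form. The reflection identity $g(w)=\overline{g(-\bar w)}$ forces the zero multiset $\{b_j\}$ to be invariant under $b\mapsto-\bar b$ and forces $c=\bar c$, i.e.\ $c=\varepsilon\in\{\pm1\}$. Grouping the zeros into pairs $\{b,-\bar b\}$, choosing from each pair the representative $a_k$ with $\arg a_k\in(0,\pi/2]$ and writing $a_k=e^{z_k}$ with $\Im z_k\in(0,\pi/2]$, the identities $e^{\overline{z_k}}=\overline{a_k}$, $e^{\overline{z_k}+i\pi}=-\overline{a_k}$ and $e^{z_k-i\pi}=-a_k$ turn $\varepsilon\,B(e^\zeta)$ into exactly \eqref{GFunction}. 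Conversely every function of the form \eqref{GFunction} lies in $\mathcal{G}_{\text{lim}}$: its poles sit at $e^\zeta\in\{\overline{a_k},-a_k\}$, which lie in the open lower half-plane and hence outside $S(-\kappa,\pi+\kappa)$ for small $\kappa>0$, so it is regular, and $|G|=1$ on $\mathbb{R}$ together with the symmetry is verified factor by factor. Finally, for the \emph{Moreover} part I compute the limits directly from \eqref{GFunction}: as $w=e^\theta\to+\infty$ every factor tends to $1$, and as $w\to 0^+$ every factor tends to $\tfrac{a_k}{\overline{a_k}}\cdot\tfrac{\overline{a_k}}{a_k}=1$, so both limits equal $\varepsilon=\pm1$.

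I expect the main obstacle to be the finite-Blaschke step: controlling $g$ at the two exceptional boundary points $w=0,\infty$ (the images of $\theta\to\mp\infty$), where analytic continuation across $\mathbb{R}$ is unavailable. Excluding accumulation of zeros there and ruling out any singular or outer contribution is exactly where the existence of the limits, the boundedness furnished by regularity, and the normal-families and Phragmén–Lindelöf arguments must be combined.
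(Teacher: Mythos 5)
Your proof is correct, and its skeleton matches the paper's: verify directly that the functions \eqref{GFunction} lie in $\mathcal{G}_{\text{lim}}$, then for the converse show $G$ has finitely many zeros, divide by the Blaschke product carrying those zeros, and prove the quotient is a unimodular constant fixed to $\pm1$ by the symmetry. Where you genuinely diverge is in the last, decisive step. The paper stays in the strip: it reflects $F=G\cdot B^{-1}$ across the real line via $F(\zeta)=1/\overline{F(\overline{\zeta})}$, checks that the two edges $\pm i\pi$ match, obtains a bounded $2\pi i$-periodic entire function, and invokes Liouville; this simultaneously pins the constant to $1$ and forces $\epsilon_1=\epsilon_2$. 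You instead transport to the upper half-plane via $w=e^{\zeta}$ and run a maximum-principle/Phragm\'en--Lindel\"of argument on $Q$ and $1/Q$. Both are valid; the paper's reflection trick is slicker in that it never needs a lower bound on $|F|$ to be stated separately (it is built into the reflection being bounded), whereas your route makes the real content more visible -- namely that one must exclude a singular inner factor such as $e^{iw}$ at the two exceptional boundary points $w=0,\infty$, and that the existence of the limits is exactly what does this. The one step you should make fully explicit is the boundedness of $1/Q$: it requires $|Q|$ to be bounded below near $0$ and $\infty$ \emph{uniformly in the closed half-plane}, not merely along the boundary; your Montel/normal-families argument (the analogue of the paper's appeal to Titchmarsh for the uniform approach $G(\theta+i\lambda)\to\epsilon_{1,2}$ uniformly in $\lambda\in[0,\pi]$) supplies precisely this, so the gap is one of presentation rather than substance. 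Finally, both you and the paper gloss over the same minor bookkeeping point: zeros with $\Im z_k=\pi/2$ are fixed by the involution $z\mapsto\overline{z}+i\pi$, so the corresponding factor in \eqref{GFunction} is a square and multiplicities must be counted accordingly.
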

\begin{proof}
In the following we will use $\zeta \in \mathbb{C}$ to denote complex arguments and $\theta \in \mathbb{R}$ to denote real arguments in functions. \par
Each factor 
$$g_{z_k}: \zeta \mapsto \pm \frac{e^{\zeta} - e^{z_k}}{e^{\zeta}-e^{\overline{z_k}}}\cdot \frac{e^{\zeta} - e^{\overline{z_k}+i\pi}}{e^{\zeta} - e^{z_k-i\pi}}$$
clearly satisfies $|g_{z_k}(\theta)| =1$  and also $g_{z_k}(\theta) = \overline{g_{z_k}(i\pi +\theta)}$. Indeed,
\begin{equation*}
\begin{split}
g_{z_k}(i\pi +\theta) &=\pm \frac{-e^{\theta} - e^{z_k}}{-e^{\theta}-e^{\overline{z_k}}}\cdot \frac{-e^{\theta} - e^{\overline{z_k}+i\pi}}{-e^{\theta} - e^{z_k-i\pi}}\\
&= \pm \frac{e^{\theta} + e^{z_k}}{e^{\theta}+e^{\overline{z_k}}}\cdot \frac{e^{\theta} + e^{\overline{z_k}+i\pi}}{e^{\theta} + e^{z_k-i\pi}}\\
&= \pm \frac{e^{\theta} - e^{z_k-i\pi}}{e^{\theta}-e^{\overline{z_k}+i\pi}}\cdot \frac{e^{\theta} - e^{\overline{z_k}}}{e^{\theta} - e^{z_k}} = \overline{g_{z_k}(\theta)}.
\end{split}
\end{equation*}
The location of the poles in the expression implies that for a sufficiently small $\delta >0$, the factor $g_{z_k}$ is analytic and bounded in the strip $S(-\Im z_k+\delta, \pi + \Im z_k -\delta) \supset S(0,\pi)$ and moreover the product \eqref{GFunction} is finite so it follows that $G$ is analytic and bounded in the strip $S(-\kappa, \pi + \kappa)$ for some $\kappa >0$. From \eqref{GFunction} it is clear that the limits $\lim_{\theta \to \infty} G(\theta), \lim_{\theta \to -\infty} G(\theta)$ exist and coincide with $\epsilon$ which shows that $G\in \mathcal{G}_{\text{lim}}$.\par
\vspace{2mm}
Now we pick an arbitrary $G\in \mathcal{G}_{\text{lim}}$ and show that it is of the form \eqref{GFunction}. Let $\epsilon_1 := \lim_{\theta\to \infty} G(\theta), \epsilon_2:= \lim_{\theta \to -\infty} G(\theta)$, then from the regularity properties of $G$, we have that $G(\theta + i\lambda) \to \epsilon_1$ as $\theta \to \infty$ uniformly in $\lambda \in [0,\pi]$ \cite[pp.~170]{titchmarsh1939theory}. Since $|G(\theta)| =1$ for all real $\theta$, we have $|\epsilon_1| = 1$. Moreover, $\overline{\epsilon_1}=\lim_{\theta \to \infty} \overline{G(\theta)} = \lim_{\theta \to \infty} G(i\pi + \theta)=\epsilon_1$, i.e. $\epsilon_1=\pm1$. Analogously one sees $\epsilon_2=\pm1$.

Furthermore, $G$ is continuous on the closed strip $\overline{S(0,\pi )}$ and of unit modulus on the boundary, so the uniform approach to the limits implies that $G$ has only finitely many zeroes in $S(0,\pi)$. Let $z_1, \ldots, z_N$ be the zeroes of $G$ whose imaginary parts $\mu$ satisfy $0<\mu \leq \frac{\pi}{2}$. For every zero $z_i$, there is also a corresponding zero $\overline{z_i} + i\pi$. Consider now the Blaschke product
$$B(\zeta) = \epsilon_1\prod_{k=1}^{N} \frac{e^{\zeta}- e^{z_k}}{e^{\zeta} - e^{\overline{z_k}}}\cdot \frac{e^{\zeta} - e^{\overline{z_k} + i\pi}}{e^{\zeta} - e^{z_k - i\pi}}.$$
Now, $B$ has precisely the same number of zeroes as $G$ and also $B(\theta + i\lambda) \to \epsilon_1$ for $\theta \to \pm \infty$ uniformly in $\lambda \in [0,\pi]$.
\par
Define a new function $F$ by
$$F = G\cdot B^{-1}.$$
By construction, $F$ is analytic and non-vanishing in $S(0,\pi)$ and also $F(\theta + i\lambda) \to 1$ for $\theta \to \infty$ and $F(\theta + i\lambda) \to \epsilon_2/\epsilon_1$ for $\theta \to -\infty$, uniformly for $\lambda \in [0,\pi]$. Thus $F$ is bounded above and below, and there exists a $K>0$ such that $K< |F(\theta)| <1$ for $\theta \in \overline{S(0,\pi)}$. Define now $F$ on the lower strip $S(-\pi,0)$ by
$$F(\zeta) = \frac{1}{\overline{F(\overline{\zeta})}},\qquad \zeta\in S(-\pi,0).$$ Since $F$ has unit modulus on the real line, this provides an analytic continuation of $F$ to $S(-\pi,\pi)$ which is uniformly bounded because $|F|$ was bounded above and below on $S(0,\pi)$. 
At the lower boundary we find, $\theta\in\mathbb R$, 
\begin{align*}
	F(\theta-i\pi)=\overline{F(\theta+i\pi)}^{-1}=F(\theta)^{-1}=F(\theta+i\pi).
\end{align*}
Hence $F$ extends to a $2\pi i$-periodic, entire function which is bounded and so constant by Liouville's theorem. Thus we have that $F(\theta) = \lim_{\theta \to \infty} F(\theta) = 1$, so $G = FB=B$.
\end{proof}
\begin{corollary}
Let $S_D(\theta)$ be a diagonal $S$-matrix with limits, then 
$$\lim_{\theta \to \infty}S_D(\theta) = \lim_{\theta \to -\infty}S_D(\theta)$$
and the limits are involutive diagonal R-matrices.
\end{corollary}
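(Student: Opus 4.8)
The plan is to push everything down to the level of the scalar matrix entries and let the preceding Proposition carry out all of the analytic work. By \eqref{DiagonalMatrix}, the only non-vanishing entries of $S_D(\theta)$ are the functions $\omega_{\alpha\beta}(\theta)$ (decorated by Kronecker deltas), and each $\omega_{\alpha\beta}$ lies in $\mathcal{G}_{\text{lim}}$ by hypothesis. The Proposition then guarantees that for every pair $\alpha,\beta$ the two limits $\lim_{\theta\to+\infty}\omega_{\alpha\beta}(\theta)$ and $\lim_{\theta\to-\infty}\omega_{\alpha\beta}(\theta)$ exist, coincide, and equal a common sign $\epsilon_{\alpha\beta}\in\{+1,-1\}$. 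Since $\mathcal{H}$ is finite dimensional there are only finitely many such entries, so entrywise convergence is convergence of the operators; hence $S_\pm:=\lim_{\theta\to\pm\infty}S_D(\theta)$ both exist and agree entry by entry. This already delivers the claimed equality $\lim_{\theta\to+\infty}S_D(\theta)=\lim_{\theta\to-\infty}S_D(\theta)$.

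It then remains to identify the common limit $S_+=S_-$ as an involutive diagonal R-matrix. Its entries are $\epsilon_{\alpha\beta}\,\delta^\alpha_\delta\delta^\beta_\gamma$, which is again of the diagonal shape \eqref{DiagonalMatrix} (now with constant unimodular --- indeed real --- coefficients), so $S_+$ is diagonal. Unitarity and the Yang--Baxter equation are preserved in the limit: both are continuous (polynomial) relations in the finitely many matrix entries, so they pass from $S_D(\theta)$ to $S_+$, placing $S_+$ in the set of unitary solutions of \eqref{Y-BC}. For involutivity I would invoke Hermitian analyticity: writing condition b) of Definition~\ref{SMatrix} together with unitarity as $S_D(\theta)^*=S_D(-\theta)$ and letting $\theta\to+\infty$ gives $S_+^*=S_-$, and combining this with $S_+=S_-$ forces $S_+=S_+^*$. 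A self-adjoint unitary is involutive, so $S_+=S_-\in\mathcal{R}_0(\mathcal{H})$, which completes the argument.

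I do not expect any genuine obstacle here, since the delicate analytic content --- existence, equality, and reality of the one-sided limits --- has already been absorbed into the preceding Proposition. The single point that actually uses Hermitian analyticity rather than mere convergence is involutivity: at the level of entries it amounts to the symmetry $\epsilon_{\alpha\beta}=\epsilon_{\beta\alpha}$, equivalently $\epsilon_{\alpha\beta}\epsilon_{\beta\alpha}=1$, which is exactly what the relation $S_+^*=S_-=S_+$ encodes. One could alternatively verify this by a short direct computation of $S_+^2$ on basis vectors $e_\gamma\otimes e_\delta$, but the self-adjoint-unitary observation makes the calculation unnecessary; everything else is bookkeeping in finitely many indices.
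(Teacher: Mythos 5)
Your argument is correct and follows the same route as the paper, whose entire proof is the one-line observation that every function in $\mathcal{G}_{\text{lim}}$ has coinciding $\pm1$ limits at both ends, so the entrywise (finite-dimensional) limits exist and agree. Your additional verification of involutivity via $S_D(\theta)^*=S_D(-\theta)$, hence $S_+^*=S_-=S_+$, is a welcome expansion of a step the paper leaves implicit (it had already recorded $S_+^*=S_-$ for general $S$-matrices earlier in the section), and it correctly supplies the needed symmetry $\epsilon_{\alpha\beta}=\epsilon_{\beta\alpha}$ of the limiting entries.
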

\begin{proof}
As all functions in $\mathcal{G}_{\text{lim}}$ have coinciding limits, this follows immediately.
\end{proof}
\par
The possible limits of these diagonal type $S$-matrices are of the form
$$
\left(\bigbplus_{i=1}^j \varepsilon_i\right) \boxplus \left(-\bigbplus_{i=1}^{d_{\mathcal{H}}-j}\varepsilon_i\right)$$
which are clearly involutive (and hence crossing symmetric, also) so the obstructions to local operators in the scaling limit are given by the permutation group rather than the braid group. It is possible that this is a more general phenomenon that is not restricted to the diagonal case.
\par

To conclude, let us consider constant not necessarily involutive (but unitary) R-matrices in dimension $2$. Strengthening a result of Dye \cite{Dye}, it was recently shown that 

in dimension 2, every such R-matrix is type 1 equivalent (see p.~\pageref{type1}) to one of the following four cases \cite{ContiLechner:2019}:

\begin{equation*}
\begin{split}
R_1 &= q\cdot 1, \quad q\in \mathbb{T}=\{ z\in \mathbb{C} \ | \ |z| =1\}\\
R_2 &= \begin{pmatrix} p&0&0&0\\ 0&0&q&0\\0&r&0&0\\0&0&0&s \end{pmatrix}, \quad p,q,r,s\in \mathbb{T},\\
R_3 &= \begin{pmatrix} 0&0&0&p\\0&q&0&0\\0&0&q&0\\r&0&0&0\end{pmatrix}, q, p\cdot r \in \mathbb{T},\\
R_4 &= \frac{q}{\sqrt{2}} \begin{pmatrix}1&1&0&0\\-1&1&0&0\\0&0&1&-1\\0&0&1&1\end{pmatrix}, \quad q\in \mathbb{T}.
\end{split}
\end{equation*}
We note here that the representatives $R_1, R_4$ are never crossing symmetric, $R_2$ is crossing symmetry if and only if $q=r$, and $R_3$ is crossing symmetric if and only if $p=q=r$, that is involutive up to a constant factor. Crossing symmetry also dictates that if $R_i$ is crossing symmetric, then $(Q\otimes Q)R_i(Q\otimes Q)^{-1}$ is crossing symmetry if and only if the unitary $Q$ is real-valued and hence orthogonal \cite{Me}. 

The interplay of crossing symmetric S-matrices, their (involutive) limits and the corresponding localization properties of the scaling limit theories will be investigated in detail in a future work.

\footnotesize

\end{document}